\documentclass[accepted]{uai2024} 
                        

\usepackage[american]{babel}

\usepackage{natbib} 
\usepackage{mathtools} 
\usepackage{booktabs} 
\usepackage{tikz} 



\usepackage{url}
\usepackage{amsmath}
\usepackage{amssymb}
\usepackage{graphicx}
\usepackage{amsthm}
\usepackage{enumitem}

\usepackage{algorithm}
\usepackage{algpseudocode}

\usepackage{cleveref}
\usepackage{wrapfig}
\usepackage[textwidth=1in, textsize=tiny]{todonotes}

\theoremstyle{plain}
\newtheorem{theorem}{Theorem}[section]

\newtheorem{lemma}[theorem]{Lemma}

\newtheorem{definition}[theorem]{Definition}

\newtheorem{assumption}[theorem]{Assumption}

\theoremstyle{definition}

\usepackage{thmtools}
\usepackage{thm-restate}

\title{Computing Low-Entropy Couplings for Large-Support Distributions}

%
%
\author[1]{\vspace{-0.5em}Samuel Sokota}
\author[1]{Dylan Sam}
\author[2]{Christian Schroeder de Witt}
\author[3]{Spencer Compton}
\author[2]{Jakob Foerster}
\author[1,4]{J. Zico Kolter\vspace{-0.5em}}
\affil[1]{Carnegie Mellon University}
\affil[2]{University of Oxford}
\affil[3]{Stanford University}
\affil[4]{Bosch AI}
\affil[ ]{\texttt{ssokota@andrew.cmu.edu}}
  
\begin{document}
\maketitle

\begin{abstract}
Minimum-entropy coupling (MEC)---the process of finding a joint distribution with minimum entropy for given marginals---has applications in areas such as causality and steganography. 
However, existing algorithms are either computationally intractable for large-support distributions or limited to specific distribution types and sensitive to hyperparameter choices. This work addresses these limitations by unifying a prior family of iterative MEC (IMEC) approaches into a generalized partition-based formalism. From this framework, we derive a novel IMEC algorithm called ARIMEC, capable of handling arbitrary discrete distributions, and introduce a method to make IMEC robust to suboptimal hyperparameter settings.
These innovations facilitate the application of IMEC to high-throughput steganography with language models, among other settings.
Our codebase is available at \url{https://github.com/ssokota/mec}.
\end{abstract}

\section{Introduction}

\looseness=-1
Given two marginal distributions, a coupling is a bivariate joint distribution with the given marginals.
In general, there may be many couplings for a particular pair of marginals.
The problem of computing a coupling with the minimum amount of joint entropy among all feasible couplings is called minimum-entropy coupling (MEC) \citep{KOVACEVIC2015369}.
Further detailed in \citet{compton2023minimumentropy}, applications of MEC include causal inference \citep{Kocaoglu_Dimakis_Vishwanath_Hassibi_2017,compton2020entropic,Javidian:21,compton2022entropic}, communication \citep{pmlr-v162-sokota22a}, steganography \citep{witt2023perfectly}, random number generation \citep{li2021}, multimodal learning \citep{liang2023quantifying}, functional representations \citep{Cicalese2019MinimumEntropyCA}, and dimensionality reduction~\citep{Vidyasagar2012,Cicalese2016}.

\looseness=-1
While MEC is NP-hard \citep{KOVACEVIC2015369}, recent works have provided approaches that achieve provable approximations of MECs \citep{Kocaoglu_Dimakis_Vishwanath_Hassibi_2017,Cicalese2019MinimumEntropyCA,rossi2019,li2021,compton2022tighter,compton2023minimumentropy,shkel2023information} in log-linear time (i.e., $O(N \log N)$) in the cardinality of the support of the marginals.
Unfortunately, the supports of many distributions of interest, such as those of generative AI models, are intractably large for these provable approximation algorithms.

\looseness=-1
To handle such cases, \citet{pmlr-v162-sokota22a} introduced a class of heuristic algorithms for producing low-entropy couplings.
These algorithms work by iteratively coupling components of random vectors using provable MEC approximation algorithms in such a way that guarantees the aggregate joint distribution is a coupling.
In practice, both \citet{pmlr-v162-sokota22a} and \citet{witt2023perfectly} find that these iterative minimum-entropy coupling (IMEC) approaches produce low-entropy couplings for distributions with very large supports---binary images and trajectories of Atari games~\citep{ale} in the work of \citet{pmlr-v162-sokota22a} and binary strings and generative models (including GPT-2~\citep{Radford2019LanguageMA}, WaveRNN~\citep{kalchbrenner_efficient_2018}, and Image Transformer~\citep{image_transformer}) in the work of \citet{witt2023perfectly}.
Unfortunately, the applicability of the IMEC algorithms \citet{pmlr-v162-sokota22a} introduced is limited to problems in which one distribution either has small support or is factorable. 
Moreover, these algorithms can be sensitive to hyperparameter choices, requiring careful tuning for optimal performance. 
\emph{As a result, at the time of writing, there exist no techniques for producing low-entropy couplings of general large-support distributions}, let alone any that are also robust to hyperparameter settings.

\looseness=-1
In this work, we make multiple contributions regarding the IMEC line of research.
First, we unify existing IMEC algorithms under a single formalism using \emph{sets of partitions}, where each partition is over the sample space of one of the given marginals.
IMEC couples distributions by iteratively performing (approximate) MECs between a conditional distribution of one marginal and the posterior over the blocks of a partition associated with the other marginal.
In particular, at each iteration, IMEC uses a partition whose associated posterior maximizes entropy.

Leveraging this formalism, we derive the first algorithm for computing low-entropy couplings for arbitrary large-support distributions, which we call autoregressive IMEC (ARIMEC).
ARIMEC uses a set of partitions, which we call the prefix tree partition set, in which each partition corresponds to a node of the prefix tree of one of the sample spaces.
These prefix trees can have large numbers of nodes (and thereby induce large numbers of partitions).
Thus, to facilitate an efficient implementation, we introduce techniques to 1) lazily update the posterior over different blocks and 2) quickly search over partitions using pruning.

Finally, recognizing IMEC's general brittleness to partition set choice, we introduce a technique, called merging, to improve its robustness. At each iteration, this technique merges sample realizations into groups with identical posterior updates. Merging uses these groupings to perform additional MECs when entropy would otherwise be wasted due to suboptimal partition sets or other factors.

We empirically validate the utility of our innovations in two settings: Markov coding games \citep{pmlr-v162-sokota22a} and steganography \citep{cachin_perfect}. In Markov coding games, the objective is to encode messages into the trajectories of a Markov decision process while achieving a high expected return. In steganography, the goal is to embed sensitive information into innocuous content such that an adversary cannot detect the hidden information. Our results show that ARIMEC achieves substantially improved communication rates in both settings, illustrating its ability to use autoregressive prior information about realistic messages. Additionally, we demonstrate that merging significantly enhances IMEC's robustness to suboptimal partition set choices, thereby facilitating easier out-of-the-box application. Overall, our results suggest ARIMEC with merging as a practical approach to applications involving computing low-entropy couplings for large-support distributions, such as high-throughput steganography with language models.

\section{Background and Notation}

For our background, we formally introduce minimum-entropy coupling and discuss existing techniques for computing and (heuristically) approximating minimum-entropy couplings.
Thereafter, we introduce notation for partitions of sets, which we will later use to unify existing methods in one general framework.

\subsection{Minimum-Entropy Coupling}

We begin by formalizing the ideas of couplings and minimum-entropy couplings.

\begin{definition}[Coupling]
Let $\mathcal{\mu} \colon \mathbb{X} \to [0, 1]$ be a probability distribution over a finite set $\mathbb{X}$ and let $\mathcal{\nu} \colon \mathbb{Y} \to [0, 1]$ be a probability distribution over a finite set $\mathbb{Y}$.
A coupling of $\mathcal{\mu}$ and $\mathcal{\nu}$ is a bivariate joint probability distribution $\gamma \colon \mathbb{X} \times \mathbb{Y} \to [0, 1]$ that marginalizes to $\mu$ and $\nu$.
In other words, $\gamma$ satisfies
\begin{align} \label{cond:coupling1}
\sum_{x' \in \mathbb{X}} \gamma(x', y) &= \nu(y), \text{ for all } y \in \mathbb{Y},\\ 
\sum_{y' \in \mathbb{Y}} \gamma(x, y') &= \mu(x), \text{ for all } x \in \mathbb{X}.\label{cond:coupling2}
\end{align} 
We use $\Gamma(\mu, \nu) = \{\gamma \mid \gamma \text{ satisfies  (\ref{cond:coupling1}) \& (\ref{cond:coupling2})}\}$ 
to denote the set of all couplings for $\mu$ and $\nu$.
\end{definition}

\begin{definition}[Joint Entropy]
Given a coupling $\gamma$, the joint entropy is defined as \[\mathcal{H}(\gamma) = - \mathbb{E}_{(X, Y) \sim \gamma} \log \gamma(X, Y).\]
\end{definition}

Throughout the paper, we will use capital letters to denote random variables, as is done in the definition above.

\begin{definition}[Minimum-Entropy Coupling]
Given two marginal distributions $\mu, \nu$, a \textbf{minimum-entropy coupling} is a coupling $\gamma \in \Gamma(\mu, \nu)$ such that 
\[\mathcal{H}(\gamma) = \min \{\mathcal{H}(\gamma') \mid \gamma' \in \Gamma(\mu, \nu)\}.\]
\end{definition}

\subsection{Computing and Approximating Minimum-Entropy Couplings} \label{sec:amec}

\looseness=-1
While computing an exact minimum-entropy coupling is NP-hard \citep{KOVACEVIC2015369}, there has been a series of recent works that construct $O(N \log N)$ approximation algorithms, where $N$ is the size of the sample space.
\citet{Cicalese2019MinimumEntropyCA} introduce an approximation algorithm that they show guarantees a coupling within 1 bit of minimum entropy.
\citet{rossi2019} show that \citet{Kocaoglu_Dimakis_Vishwanath_Hassibi_2017}'s greedy approach guarantees a coupling within 1 bit of minimum entropy.
\citet{li2021} introduce a third approach for which he also proved a 1 bit approximation guarantee.
Most recently, \citet{compton2023minimumentropy} show an improved guarantee for \citet{Kocaoglu_Dimakis_Vishwanath_Hassibi_2017}'s greedy approach of about 0.53 bits, while also showing that \citet{Cicalese2019MinimumEntropyCA} and \citet{li2021}'s algorithms cannot match this guarantee.
\citet{compton2023minimumentropy} also give approaches that guarantee exact MECs, though they require exponential time.

\subsection{Iterative Minimum-Entropy Coupling with a Tabular Posterior}

In some settings, it is desirable to (non-provably) approximate minimum-entropy couplings where one random variable is a vector that ranging over such a large number of possible outcomes that the approaches described in \Cref{sec:amec} are inapplicable.
\citet{pmlr-v162-sokota22a} propose an iterative approach to such settings that assumes that this random vector is autoregressively specified.
In this work, we refer to \citet{pmlr-v162-sokota22a}'s algorithm as tabular IMEC (TIMEC).
TIMEC guarantees that the resulting joint distribution is a coupling, supports conditional sampling and likelihood queries for both $X \mid Y$ and $Y \mid X$, where $Y$ is the random vector, and heuristically achieves low entropy.
It can either be defined using the conditional generative process for sampling $Y \mid X$ or the conditional generative process for sampling $X \mid Y$, as both induce the same joint distribution.
We focus on the process for generating $Y \mid X$, which is formalized in \Cref{alg:timec1}, in the main body but include the process for generating $X \mid Y$ in \Cref{alg:timec2} in \Cref{app:inverse}.
\Cref{alg:timec1} works iteratively in two steps: 
\begin{enumerate}[leftmargin=*]
    \item First, it performs an (approximate) MEC between the posterior over $X$ given $Y_{1:j-1}$ (inductively defined via Bayes' Theorem) and the conditional distribution $\nu(Y_j \mid Y_{1:j-1})$.\footnote{Note that we use upper-bound-inclusive indexing, so $Y_{1:0}=()$, $Y_{1:1}=(Y_1)$, $Y_{1:2}=(Y_1, Y_2)$, etc.}
    The joint posterior over $X$ and $Y_j$ given $Y_{1:j-1}$ is assigned to the output of this coupling.
    \item Second, it samples $Y_j$ from the posterior over $Y_j$ given both $X=x$ and $Y_{1:j-1}$ (also inductively defined via Bayes' Theorem). 
\end{enumerate}

\begin{algorithm}[t]
    \caption{Tabular IMEC: $Y \mid X=x$} \label{alg:timec1}
    \begin{algorithmic}
        \Procedure{TIMEC}{$\mu$, $\nu$, $x$}
        \State $\gamma(X) \gets \mu(X)$
            \For{$j=1, \dots,  m$}
                \State $\gamma(X, Y_j \mid Y_{1:j-1}) \gets \text{MEC}(\gamma(X \mid Y_{1:j-1}),$ \State \hspace{38.35mm} $\nu(Y_j \mid Y_{1:j-1}))$
                \State $Y_j \sim \gamma(Y_j \mid x, Y_{1:j-1})$
            \EndFor
            \State return $Y$
        \EndProcedure
    \end{algorithmic}
\end{algorithm}

\subsection{Iterative Minimum-Entropy Coupling with a Factored Posterior}

Unfortunately, requiring approximate MECs over distributions of size $|\mathbb{X}|$ makes TIMEC inapplicable to many settings, such as steganography with large message sizes \citep{witt2023perfectly}.
To ameliorate this issue, \citet{pmlr-v162-sokota22a} also proposed a second approach, which we refer to as factored IMEC (FIMEC)\footnote{\citet{witt2023perfectly} use the name iMEC for this approach.}, in which $X$ is also assumed to be a random vector.
Furthermore, crucially, it is assumed to be factorable.

\begin{assumption}[Factorability] \label{ass:fact}
$X=(X_1, \dots, X_n)$ is a random vector with $\mu(x) = \prod_i \mu(x_i)$ for all $x \in \mathbb{X}$.
\end{assumption}

As with TIMEC, FIMEC guarantees that the resulting distribution is a coupling, supports likelihood queries to both conditionals and the joint distribution, and heuristically achieves low entropy.
It can similarly be defined in terms of either conditional generative process ($X \mid Y$ or $Y \mid X$).
We again focus on the $Y \mid X$ case (\Cref{alg:fimec1}), and defer the $X \mid Y$ case to \Cref{app:inverse}.
The basic structure of \Cref{alg:fimec1} is analogous to that of \Cref{alg:timec1}.
However, rather than performing MECs using $\gamma(X \mid Y_{1:j-1})$, FIMEC uses $\gamma(X_{i^{\ast}} \mid Y_{1:j-1})$, where $X_{i^{\ast}}$ is a component of $X$ with maximum posterior entropy.
The other components $X_i$ for $i \neq i^{\ast}$ are left independent of $Y_j \mid Y_{1:j-1}$.

\begin{algorithm}[t]
    \caption{Factored IMEC: $Y \mid X{=}x$} 
    \begin{algorithmic}
        \Procedure{FIMEC}{$\mu$, $\nu$, $x$}
        \State $\gamma(X) \gets \mu(X)$
            \For{$j=1, \dots,  m$}
                \State $i^{\ast} \gets \text{argmax}_i \mathcal{H}(\gamma(X_i \mid Y_{1:j-1}))$
                \State $\gamma(X_{i^{\ast}}, Y_j \mid Y_{1:j-1}) \gets \text{MEC}(\gamma(X_{i^{\ast}} \mid Y_{1:j-1}),$
                \hspace{44.16mm}$\nu(Y_j \mid Y_{1:j-1}))$
                \State $\gamma(X, Y_j \mid Y_{1:j-1}) \gets \gamma(X_{i^{\ast}}, Y_j \mid Y_{1:j-1}) \cdot$ \State
                \hspace{26.52mm} $\prod_{i \neq i^{\ast}} \gamma(X_i \mid Y_{1:j-1})$
                \State $Y_j \sim \gamma(Y_j \mid x, Y_{1:j-1})$
            \EndFor
            \State return $Y$
        \EndProcedure
    \end{algorithmic}
    \label{alg:fimec1}
\end{algorithm}

\subsection{Partitions of Sets}

As discussed in the introduction, we will show the IMEC algorithms discussed in the previous two sections can be unified into a single algorithm using partitions over $\mathbb{X}$. We use the following definitions and notation for partitions.
\begin{definition}[Partition]
A partition $\mathcal{P}$ of a set $\mathbb{X}$ is a collection of blocks $\{\mathbb{B}_1, \dots, \mathbb{B}_{\ell}\}$ where: 
\begin{enumerate}[nosep,leftmargin=*]
    \item Each block is a subset of $\mathbb{X}$.
    \item Each pair of distinct blocks has an empty intersection.
    \item The union of blocks is equal to $\mathbb{X}$.
\end{enumerate}
\end{definition}

\begin{definition}[Block Function]
For a partition $\mathcal{P}$ of a set $\mathbb{X}$, the block function $\mathcal{B}_{\mathcal{P}} \colon \mathbb{X} \to \mathcal{P}$ maps $x$ to the block of the partition of which it is an element.
When $X$ is a random variable, we use $B_{\mathcal{P}} = \mathcal{B}_{\mathcal{P}}(X)$ to denote the block of $\mathcal{P}$, as a random variable, to which $X$ belongs.
\end{definition}

Note that the probability distribution of $B_{\mathcal{P}}$ is defined by \[\mu(B_{\mathcal{P}}=\mathbb{B}) = \mu(X \in \mathbb{B}) = \sum_{x \in \mathbb{B}}\mu(X=x).\]

\section{A Unification of Iterative Minimum-Entropy Coupling}

We are now ready to describe our unification of existing IMEC algorithms.
In this unification, different instances of IMEC are specified using different sets of partitions 
\[\mathfrak{P} \subset \{\mathcal{P} \mid \mathcal{P} \text{ is a partition of } \mathbb{X}\}.\]
Instances of this unified perspective guarantee that the resulting distribution is a coupling, support conditional and likelihood queries for both $X \mid Y$ and $Y \mid X$, and heuristically produce low entropy.
We define this unified perspective to IMEC using the conditional generative process given in \Cref{alg:imec1}, which samples from $Y|X$.
(Equivalently, it is defined by the generative process given in \Cref{alg:imec2} in \Cref{app:inverse}, which samples from $X|Y$).
\Cref{alg:imec1} works iteratively in three~steps:
\begin{enumerate}[leftmargin=*]
    \item First, it computes a partition $\mathcal{P} \in \mathfrak{P}$ inducing a maximum-entropy posterior. The entropy induced by a partition $\mathcal{P}$ at iteration $j$ is defined in terms of the probabilities over the blocks of the partition under $\gamma$, given $Y_{1:j-1}$.
    That is,
    \begin{align*}
    &\mathcal{H}(\gamma(B_{\mathcal{P}} \mid Y_{1:j-1}))\\ 
    &= - \sum_{\mathbb{B} \in \mathcal{P}} \gamma(X \in \mathbb{B} \mid Y_{1:j-1}) \log \gamma(X \in \mathbb{B} \mid Y_{1:j-1}).
    \end{align*}
    The intuition behind selecting the maximum-entropy partition is that it heuristically offers the opportunity to reduce the joint entropy by the largest amount.\footnote{A justification is as follows. Recall that \[\max(\mathcal{H}(C), \mathcal{H}(D)) \leq \mathcal{H}(C, D) \leq \mathcal{H}(C) + \mathcal{H}(D),\] where $\mathcal{H}(C, D)$ achieves its upper bound when $C$ and $D$ are independent. Thus, the maximum reduction in joint entropy achievable by performing a coupling is upper bounded by \[\mathcal{H}(C) + \mathcal{H}(D) - \max(\mathcal{H}(C), \mathcal{H}(D)) = \min(\mathcal{H}(C), \mathcal{H}(D)).\]
    Therefore, maximizing $\mathcal{H}(C)$ maximizes an upper bound on the joint entropy reduction.}
    \item Second, it performs an (approximate) MEC between the posterior over the blocks of the chosen partition $\mathcal{P}$ and the conditional distribution $\nu(Y_j \mid Y_{1:j-1})$.
    The joint posterior over the block $B_{\mathcal{P}}$ and $Y_j$ given $Y_{1:j-1}$ is assigned to the output of this coupling.
    \item  Third, it samples $Y_j$ from the posterior over $Y_j$ given both the block $\mathcal{B}_{\mathcal{P}}(x)$ and $Y_{1:j-1}$. 
\end{enumerate}

\begin{algorithm}[t]
    \caption{IMEC (Generic Form): $Y \mid X=x$} \label{alg:imec1}
    \begin{algorithmic}
        \Procedure{IMEC}{$\mu$, $\nu$, $x$, $\mathfrak{P}$}
        \State $\gamma(X) \gets \mu(X)$
            \For{$j=1, \dots,  m$}
                \State $\mathcal{P} \gets \arg\max_{\mathcal{P} \in \mathfrak{P}} \mathcal{H}(\gamma(B_{\mathcal{P}} \mid Y_{1:j-1}))$
                \State $\gamma(B_{\mathcal{P}}, Y_j \mid Y_{1:j-1}) \gets \text{MEC}(\gamma(B_{\mathcal{P}} \mid Y_{1:j-1}),$
                \State \hspace{42.25mm} $\nu(Y_j \mid Y_{1:j-1}))$
                \State $Y_j \sim \gamma(Y_j \mid \mathcal{B}_{\mathcal{P}}(x), Y_{1:j-1})$
            \EndFor
            \State return $Y$
        \EndProcedure
    \end{algorithmic}
\end{algorithm}

\subsection{Theory}

The general form of IMEC possesses the following two properties, which reduce to the results of \citet{pmlr-v162-sokota22a} as a special case.
\begin{restatable}[Coupling]{proposition}{coupling}
IMEC induces a coupling of $\mu$ and $\nu$. 
\label{prop:coupling}
\end{restatable}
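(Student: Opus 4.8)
The plan is to show that the joint distribution $\gamma$ over $(X, Y)$ built up by Algorithm \ref{alg:imec1} satisfies the two marginal conditions (\ref{cond:coupling1}) and (\ref{cond:coupling2}). I would argue this by induction on the iteration index $j$, maintaining the invariant that after processing $Y_{1:j}$, the object $\gamma(X, Y_{1:j})$ is a genuine joint distribution whose $X$-marginal is $\mu$ and whose $Y_{1:j}$-marginal is $\nu(Y_{1:j})$. The base case $j = 0$ is immediate since $\gamma(X) \gets \mu(X)$ and $Y_{1:0} = ()$. Since the $Y$-marginal condition is arguably the easier of the two here — each $Y_j$ is sampled from $\gamma(Y_j \mid \mathcal{B}_{\mathcal{P}}(x), Y_{1:j-1})$, and one checks that averaging over $x$ and over the block returns $\nu(Y_j \mid Y_{1:j-1})$ because the MEC step is by definition a coupling of $\gamma(B_{\mathcal{P}} \mid Y_{1:j-1})$ with $\nu(Y_j \mid Y_{1:j-1})$ — I would present it first, then turn to the $X$-marginal.

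For the inductive step I would spell out explicitly what $\gamma(X, Y_{1:j})$ is as a product of conditionals. The algorithm defines, at step $j$, the joint posterior $\gamma(B_{\mathcal{P}}, Y_j \mid Y_{1:j-1})$ via MEC, and then samples $Y_j$ from $\gamma(Y_j \mid \mathcal{B}_{\mathcal{P}}(x), Y_{1:j-1})$; the key modeling point is that, conditioned on $Y_{1:j-1}$, the variable $Y_j$ is taken to depend on $X$ only through its block $B_{\mathcal{P}} = \mathcal{B}_{\mathcal{P}}(X)$. Hence $\gamma(X, Y_j \mid Y_{1:j-1}) = \gamma(X \mid Y_{1:j-1})\,\gamma(Y_j \mid B_{\mathcal{P}}, Y_{1:j-1})$, and multiplying through by $\gamma(Y_{1:j-1})$ from the induction hypothesis gives a valid joint $\gamma(X, Y_{1:j})$. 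Marginalizing out $Y_j$ recovers $\gamma(X, Y_{1:j-1})$, whose $X$-marginal is $\mu$ by hypothesis, so the $X$-marginal is preserved. For the $Y$-marginal, I would sum $\gamma(X, Y_j \mid Y_{1:j-1})$ over $x$: grouping the sum by blocks $\mathbb{B} \in \mathcal{P}$ turns $\sum_x \gamma(X = x \mid Y_{1:j-1})\gamma(Y_j \mid \mathbb{B}, Y_{1:j-1})$ into $\sum_{\mathbb{B}} \gamma(B_{\mathcal{P}} = \mathbb{B} \mid Y_{1:j-1})\gamma(Y_j \mid \mathbb{B}, Y_{1:j-1}) = \gamma(Y_j \mid Y_{1:j-1})$, where the last equality is exactly the statement that the MEC output is a coupling of the block-posterior and $\nu(Y_j \mid Y_{1:j-1})$; combined with the induction hypothesis on $\gamma(Y_{1:j-1}) = \nu(Y_{1:j-1})$ this yields $\nu(Y_{1:j})$.

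The main obstacle is bookkeeping rather than any deep step: one must be careful that all the conditionals referenced by the algorithm (the block-posterior $\gamma(B_{\mathcal{P}} \mid Y_{1:j-1})$ obtained by Bayesian updating, and $\gamma(Y_j \mid \mathcal{B}_{\mathcal{P}}(x), Y_{1:j-1})$) are well-defined and mutually consistent, i.e. that they are all conditionals of the single joint $\gamma$ being constructed, and that the argmax selection of $\mathcal{P} \in \mathfrak{P}$ — which can depend on $Y_{1:j-1}$ but not on $x$ beyond the public information — does not break the factorization $Y_j \perp X \mid (B_{\mathcal{P}}, Y_{1:j-1})$. Once that consistency is pinned down, the marginal identities follow by the two short summations above. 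Finally I would note that this recovers \citet{pmlr-v162-sokota22a}'s coupling guarantees for TIMEC and FIMEC as the special cases $\mathfrak{P} = \{\{\{x\} : x \in \mathbb{X}\}\}$ (the partition into singletons) and $\mathfrak{P} = \{\mathcal{P}_i\}_i$ with $\mathcal{P}_i$ the partition induced by the value of coordinate $X_i$, respectively.
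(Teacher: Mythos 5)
Your proposal is correct and follows essentially the same route as the paper's proof: induction over the components of $Y$, grouping the sum over $x$ by the blocks of the selected partition (using that, by construction, $Y_j$ depends on $X$ only through $B_{\mathcal{P}}$ given $Y_{1:j-1}$), and invoking the coupling property of the MEC step to recover $\nu(Y_j \mid Y_{1:j-1})$. The only difference is cosmetic: you carry both marginal conditions in the inductive invariant, whereas the paper verifies only the $\nu$-marginal by this induction and treats the $\mu$-marginal as immediate from the construction $\gamma(x,y)=\mu(x)\gamma(y\mid x)$.
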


\begin{restatable}[Greediness]{proposition}{greediness}  
If the partition of singletons is in $\mathfrak{P}$, IMEC approximately minimizes $\mathcal{H}(X, Y_{1:j})$ subject to $\mu, \nu, \gamma(X, Y_{1:j-1})$ on the $j$th iteration, for each $j$. 
\label{prop:greediness}
\end{restatable}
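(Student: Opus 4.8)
The plan is to reduce the claimed per-iteration optimality to the approximation guarantee of the MEC subroutine, applied separately to each realization of $Y_{1:j-1}$. Fix $j$ and regard the joint $\gamma(X, Y_{1:j-1})$ built by the first $j-1$ iterations as fixed; the phrase ``subject to $\mu,\nu,\gamma(X, Y_{1:j-1})$'' then means that this joint is held fixed (which in particular pins the $X$-marginal to $\mu$ and fixes the law of $Y_{1:j-1}$) and that $\gamma(Y_j \mid Y_{1:j-1}) = \nu(Y_j \mid Y_{1:j-1})$, so that the eventual coupling can marginalize to $\nu$. By the chain rule for entropy,
\[
\mathcal{H}(X, Y_{1:j}) = \mathcal{H}(Y_{1:j-1}) + \sum_{y_{1:j-1}} \gamma(y_{1:j-1})\,\mathcal{H}\bigl(\gamma(X, Y_j \mid y_{1:j-1})\bigr),
\]
where the first summand and the weights $\gamma(y_{1:j-1})$ are already determined. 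Since the marginal constraints on each conditional law $\gamma(X, Y_j \mid y_{1:j-1})$ --- its $X$-marginal equals the fixed $\gamma(X \mid y_{1:j-1})$, its $Y_j$-marginal equals $\nu(Y_j \mid y_{1:j-1})$ --- are imposed independently across the values $y_{1:j-1}$, minimizing $\mathcal{H}(X, Y_{1:j})$ under the stated constraints is exactly the task of computing, for each $y_{1:j-1}$, a minimum-entropy coupling of $\gamma(X \mid y_{1:j-1})$ and $\nu(Y_j \mid y_{1:j-1})$.

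Next I would argue that IMEC, under the hypothesis that the partition of singletons lies in $\mathfrak{P}$, solves each of these MEC problems up to the additive error of the MEC subroutine. Because $B_{\mathcal{P}} = \mathcal{B}_{\mathcal{P}}(X)$ is a deterministic function of $X$, we have $\mathcal{H}(\gamma(B_{\mathcal{P}} \mid y_{1:j-1})) \le \mathcal{H}(\gamma(X \mid y_{1:j-1}))$ for every $\mathcal{P} \in \mathfrak{P}$, with equality when $\mathcal{P}$ is the partition of singletons; hence the $\arg\max$ in \Cref{alg:imec1} selects some $\mathcal{P}$ attaining $\mathcal{H}(\gamma(B_{\mathcal{P}} \mid y_{1:j-1})) = \mathcal{H}(\gamma(X \mid y_{1:j-1}))$. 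By the chain rule $\mathcal{H}(X \mid y_{1:j-1}) = \mathcal{H}(B_{\mathcal{P}} \mid y_{1:j-1}) + \mathcal{H}(X \mid B_{\mathcal{P}}, y_{1:j-1})$, any such maximizer makes $X$ a $\gamma(\cdot \mid y_{1:j-1})$-almost-sure function of $B_{\mathcal{P}}$; consequently the couplings of $(B_{\mathcal{P}}, Y_j)$ and of $(X, Y_j)$ with the relevant marginals are in entropy-preserving bijection, and the conditional law $\gamma(X, Y_j \mid y_{1:j-1})$ the algorithm actually produces --- namely $\gamma(x, y_j \mid y_{1:j-1}) = \gamma(x \mid y_{1:j-1})\,\gamma(y_j \mid \mathcal{B}_{\mathcal{P}}(x), y_{1:j-1})$, which one checks is consistent with the MEC output over the blocks --- inherits the subroutine's approximation bound (e.g., the $\approx 0.53$ bits of \citet{compton2023minimumentropy}) relative to $\min_{\gamma'} \mathcal{H}(\gamma')$ over couplings of $\gamma(X \mid y_{1:j-1})$ and $\nu(Y_j \mid y_{1:j-1})$.

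Combining the two steps, each term $\mathcal{H}(\gamma(X, Y_j \mid y_{1:j-1}))$ is within the subroutine's additive error of its minimum, so the fixed-weight average $\sum_{y_{1:j-1}} \gamma(y_{1:j-1})\,\mathcal{H}(\gamma(X, Y_j \mid y_{1:j-1}))$, and hence $\mathcal{H}(X, Y_{1:j})$, is within the same additive error of its constrained minimum --- which is precisely the assertion, with ``approximately'' quantified by whichever MEC approximation algorithm is used. Taking $\mathfrak{P} = \{\text{singletons}\}$ recovers the greediness claim of \citet{pmlr-v162-sokota22a} for TIMEC as a special case.

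The step I expect to be the main obstacle is the reduction in the first paragraph: one must verify carefully that the constrained minimization of $\mathcal{H}(X, Y_{1:j})$ genuinely decouples across realizations of $Y_{1:j-1}$ --- i.e., that holding $\gamma(X, Y_{1:j-1})$ and the per-realization conditionals $\nu(Y_j \mid Y_{1:j-1})$ fixed leaves no residual inter-slice coupling --- and to state precisely what ``approximately minimizes $\ldots$ subject to $\ldots$'' ranges over. A secondary subtlety is the tie-breaking argument: that every entropy-maximal partition is as good as the singleton partition for the coupling step, which rests on the small fact that an entropy-preserving coarsening of $X$ determines $X$ almost surely.
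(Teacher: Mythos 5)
Your proposal is correct, and its skeleton matches the paper's: reduce the claim to the case where the coupling is effectively performed over $\mathbb{X}$ itself (using that the singleton partition is in $\mathfrak{P}$), then show that minimizing $\mathcal{H}(X, Y_{1:j})$ with the earlier stages fixed is the same as minimizing $\mathcal{H}(X, Y_j \mid Y_{1:j-1})$, which is what the per-iteration (approximate) MEC targets. The differences are in how the two reductions are executed. The paper conditions on $X$: it writes $\mathcal{H}(X, Y_{1:j}) = \mathcal{H}(X) + \sum_{k \le j}\mathcal{H}(Y_k \mid X, Y_{1:k-1})$, notes the terms with $k<j$ are fixed, and uses the equivalence between minimum-entropy and minimum-conditional-entropy coupling twice; for the partition-tie issue it invokes \Cref{lemma:trivial-partition} and the observation that coupling with the singleton partition is coupling with $\mathbb{X}$. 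You instead condition on $Y_{1:j-1}$, writing $\mathcal{H}(X,Y_{1:j}) = \mathcal{H}(Y_{1:j-1}) + \sum_{y_{1:j-1}}\gamma(y_{1:j-1})\,\mathcal{H}(\gamma(X,Y_j\mid y_{1:j-1}))$, and you verify that the constrained feasible set factors into independent per-slice coupling polytopes, so the problem decouples into per-realization MECs; and you replace the appeal to \Cref{lemma:trivial-partition} by the data-processing/zero-conditional-entropy argument that any entropy-maximizing $\mathcal{P}\in\mathfrak{P}$ makes $X$ an almost-sure function of $B_{\mathcal{P}}$, giving an entropy-preserving bijection between block-level and $X$-level couplings. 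What your route buys is precision the paper leaves implicit: an explicit statement of what the constrained minimization ranges over, why it decouples across realizations of $Y_{1:j-1}$, and how the MEC subroutine's additive guarantee (e.g.\ the $\approx 0.53$-bit bound) propagates to quantify ``approximately''; the paper's route is terser and stays at the level of averaged conditional entropies. Both the decoupling verification and the tie-breaking argument you flagged as the delicate points are handled correctly in your write-up, so there is no gap.
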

Proofs for these statements are provided in \Cref{app:coupling} and \Cref{app:greediness}, respectively.

The general form of IMEC also possesses the following runtime guarantee, which says that IMEC can be implemented efficiently whenever maximum-entropy posterior partition computation is efficient.
\begin{restatable}[IMEC Runtime]{proposition}{imec} \label{prop:imec}
Given a polynomial-time function for computing the maximum-entropy posterior partition, IMEC can be implemented in polynomial time in $\max_j |\mathbb{Y}_j|, \max_i |\mathbb{X}_i|, m, n$.
\end{restatable}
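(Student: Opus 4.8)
The plan is to bound the work done in one pass of the main loop of \Cref{alg:imec1} and multiply by the number of iterations $m$, working throughout with the product sample spaces $\mathbb{X} = \prod_i \mathbb{X}_i$, $\mathbb{Y} = \prod_j \mathbb{Y}_j$ and with $\mu, \nu$ accessed through autoregressive conditional queries. The first thing to observe is that, up through the point at which $Y_{1:j-1}$ has been sampled, every object the algorithm manipulates — the sequence of selected partitions, the MEC tables, and the posteriors $\gamma(\cdot \mid Y_{1:j-1})$ — is a deterministic function of $\mu$, $\nu$, and $Y_{1:j-1}$ (the realization $x$ enters only through $\mathcal{B}_{\mathcal{P}}(x)$ in the final sampling line of each earlier iteration, which is downstream of conditioning on $Y_{1:j-1}$). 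Hence the algorithm can carry, and incrementally update, a representation of $\gamma(\cdot \mid Y_{1:j-1})$ consisting of the list of partitions chosen, MEC outputs produced, and symbols observed on iterations $1, \dots, j-1$; I will argue this representation stays polynomial-size.

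For iteration $j$ itself there are three operations to account for. First, the line $\mathcal{P} \gets \arg\max_{\mathcal{P} \in \mathfrak{P}} \mathcal{H}(\gamma(B_{\mathcal{P}} \mid Y_{1:j-1}))$ is precisely the maximum-entropy posterior partition computation, which by hypothesis runs in polynomial time; since any routine computing this objective must also materialize the block probabilities $\gamma(X \in \mathbb{B} \mid Y_{1:j-1})$, I will take it without loss of generality to also return the vector $\gamma(B_{\mathcal{P}} \mid Y_{1:j-1})$, and — it being written down in polynomial time — this vector has at most polynomially many entries, that is, $\mathcal{P}$ has at most polynomially many blocks of positive posterior. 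Second, the MEC step is applied to $\gamma(B_{\mathcal{P}} \mid Y_{1:j-1})$, whose support size is bounded by that polynomial, and to $\nu(Y_j \mid Y_{1:j-1})$, whose support size is at most $\max_j |\mathbb{Y}_j|$ and which costs one autoregressive query to obtain; running any of the $O(N \log N)$ approximate-MEC algorithms of \Cref{sec:amec} on these inputs takes polynomial time and yields a polynomially sized table. Third, the line $Y_j \sim \gamma(Y_j \mid \mathcal{B}_{\mathcal{P}}(x), Y_{1:j-1})$ requires evaluating the block function at $x$ (reading the $n$-tuple $x$, each entry of size $O(\log \max_i |\mathbb{X}_i|)$, and looking up its block), then reading the corresponding row of the MEC table and sampling, all polynomial in $\max_j |\mathbb{Y}_j|, \max_i |\mathbb{X}_i|, n$. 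Appending $\mathcal{P}$, the new MEC table, and the sampled symbol to the carried representation of $\gamma$ is polynomial, so its size stays polynomial across all $m$ iterations.

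Summing over $j = 1, \dots, m$ then yields a total running time polynomial in $\max_j |\mathbb{Y}_j|, \max_i |\mathbb{X}_i|, m, n$. I expect the main obstacle to be making precise that no step ever incurs a cost governed by $|\mathbb{X}| = \prod_i |\mathbb{X}_i|$, which is exponential in $n$: the only step that could in principle search or sum over all of $\mathbb{X}$ is the maximum-entropy partition selection, and that is exactly the part assumed to be a polynomial-time oracle, while every remaining operation — the MEC, the conditional sampling of $Y_j$, the bookkeeping of $\gamma$ — is confined to the block-level posterior and to distributions over a single $\mathbb{Y}_j$, whose sizes are controlled by the oracle's (polynomial) output length and by $\max_j |\mathbb{Y}_j|$. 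The one secondary subtlety worth stating carefully is why the first argument to MEC is polynomially sized; the cleanest route is the remark above that evaluating the partition-selection objective already forces the oracle to write down the block posterior, so polynomial-time oracle access implies a polynomial bound on the number of blocks that matter.
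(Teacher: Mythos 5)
Your proposal is correct and follows essentially the same route as the paper's proof: bound the cost of each operation in the main loop (the oracle call by assumption, the approximate MEC on block-level and $\mathbb{Y}_j$-level distributions, and the conditional sampling of $Y_j$) and multiply by the $m$ iterations. The only difference is that you spell out why the MEC's first argument is polynomially sized via the oracle's output length, a point the paper simply asserts by taking the coupled distributions to have support $O(\max(\max_j|\mathbb{Y}_j|, \max_i|\mathbb{X}_i|))$.
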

We prove \Cref{prop:imec} in \Cref{app:complexity}.

\subsection{Special Cases}

\paragraph{Tabular Posterior} To implement TIMEC using \Cref{alg:imec1}, we can select the partition set $\mathfrak{P}$ to be the set of all partitions of $\mathbb{X}$.
As per \Cref{lemma:trivial-partition}, which is stated and derived in \Cref{app:remarks}, the partition of singletons (or a partition that is equivalent up to measure zero) will always be selected, as it achieves maximum entropy.
Coupling with the partition of singletons is equivalent to coupling over the whole set, which is exactly what TIMEC does.

Using \Cref{prop:imec}, we derive the following runtime guarantee for TIMEC in \Cref{app:complexity}.
\begin{restatable}[TIMEC Runtime]{corollary}{timec} \label{prop:timec}
TIMEC can be implemented in polynomial time in $\max_j |\mathbb{Y}_j|, |\mathbb{X}|, m$.
\end{restatable}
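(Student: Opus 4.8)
The plan is to obtain this as an immediate corollary of \Cref{prop:imec}, using the identification of TIMEC with the instance of IMEC in which $\mathfrak{P}$ is taken to be the set of all partitions of $\mathbb{X}$. The only hypothesis \Cref{prop:imec} imposes is the availability of a polynomial-time subroutine for computing a maximum-entropy posterior partition, so the crux is to exhibit such a subroutine for this particular $\mathfrak{P}$. The key observation is that it is trivial: by \Cref{lemma:trivial-partition}, the partition of singletons (or one equal to it up to measure zero) always attains the maximum posterior entropy among all partitions of $\mathbb{X}$, so the $\arg\max$ step in \Cref{alg:imec1} can simply return the singleton partition, with no search over $\mathfrak{P}$ required. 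Materializing this partition and the associated posterior $\gamma(B_{\mathcal{P}} \mid Y_{1:j-1})$ costs $O(|\mathbb{X}|)$, which is polynomial in $|\mathbb{X}|$.

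Next I would record that, under the natural bijection between the blocks of the singleton partition and the elements of $\mathbb{X}$, the block posterior $\gamma(B_{\mathcal{P}} \mid Y_{1:j-1})$ is precisely $\gamma(X \mid Y_{1:j-1})$, so each per-iteration operation of \Cref{alg:imec1} with this $\mathfrak{P}$ — the approximate MEC between a distribution of support size $|\mathbb{X}|$ and $\nu(Y_j \mid Y_{1:j-1})$ of support size $|\mathbb{Y}_j|$, the Bayesian posterior update, and the sampling of $Y_j$ — coincides with the corresponding operation of \Cref{alg:timec1}. This confirms that no cost beyond what \Cref{prop:imec} already accounts for is incurred.

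Finally I would invoke \Cref{prop:imec} directly, viewing $X$ as a single-component vector so that $n = 1$ (a constant) and $\max_i |\mathbb{X}_i| = |\mathbb{X}|$. Having supplied the required polynomial-time maximum-entropy-partition subroutine above, \Cref{prop:imec} yields that IMEC with this $\mathfrak{P}$, i.e., TIMEC, runs in polynomial time in $\max_j |\mathbb{Y}_j|$, $|\mathbb{X}|$, $m$, and $n = 1$, which is the stated claim.

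The main obstacle is conceptual rather than arithmetic: a priori the $\arg\max$ in \Cref{alg:imec1} ranges over the collection of all partitions of $\mathbb{X}$, whose cardinality is the Bell number $B_{|\mathbb{X}|}$ and hence super-exponential in $|\mathbb{X}|$, so one must ensure that instantiating IMEC with this $\mathfrak{P}$ does not covertly require enumerating partitions. \Cref{lemma:trivial-partition} is exactly the tool that dissolves this difficulty by certifying in advance the output of the $\arg\max$; everything else is routine bookkeeping to verify that the per-iteration costs match those already bounded in the proof of \Cref{prop:imec}.
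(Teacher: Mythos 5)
Your proposal is correct and follows essentially the same route as the paper's proof: invoke \Cref{prop:imec} after using \Cref{lemma:trivial-partition} to certify that the singleton partition is always a maximum-entropy choice, so that the partition step (and the posterior over $X$) is computable in time polynomial in $\max_j |\mathbb{Y}_j|$ and $|\mathbb{X}|$. Your additional bookkeeping (identifying the block posterior with $\gamma(X \mid Y_{1:j-1})$ and taking $n=1$, $\max_i |\mathbb{X}_i| = |\mathbb{X}|$) only makes explicit what the paper leaves implicit.
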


Note that TIMEC's polynomial time guarantee is in contrast to a direct application of an approximate MEC algorithm, which would require exponential time as a function of the same quantities.

\begin{figure}
    \centering
    \includegraphics[width=\linewidth]{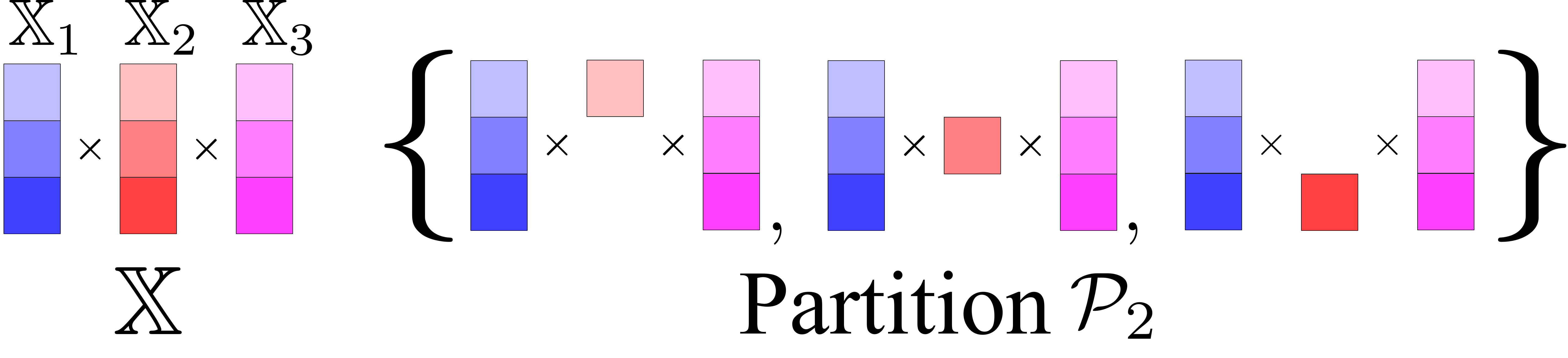}
    \caption{(Left) A set $\mathbb{X}$ of sequences  of length 3; (right) a partition $\mathcal{P}_2$ used by FIMEC.}
    \label{fig:factored-vis}
\end{figure}

\paragraph{Factored Posterior} To implement FIMEC using \Cref{alg:imec1}, we can select the partition set as $\mathfrak{P} = \{\mathcal{P}_1, \dots, \mathcal{P}_n\}$, where for each $i$, 
\[\mathcal{P}_i = \{\mathbb{X}_1 \times \dots \times \mathbb{X}_{i-1} \times \{x_i\}  \times \mathbb{X}_{i+1} \times \dots \times \mathbb{X}_n \mid x_i \in \mathbb{X}_i\}\]
and where $\mathbb{X}_i$ denotes the sample space for $X_i$.
An example is shown in Figure \ref{fig:factored-vis}.
From this perspective, selecting $\mathcal{P}_i$ on a particular iteration is equivalent to selecting $X_i$ as the component with which to couple.

Using \Cref{prop:imec}, we derive the following runtime guarantee for FIMEC in \Cref{app:complexity}.
\begin{restatable}[FIMEC Runtime]{corollary}{fimec} \label{prop:fimec}
Let \Cref{ass:fact} hold. Then FIMEC can be implemented in polynomial time in $\max_j |\mathbb{Y}_j|, \max_i | \mathbb{X}_i|, m, n$.
\end{restatable}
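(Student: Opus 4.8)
The plan is to obtain \Cref{prop:fimec} as an immediate consequence of \Cref{prop:imec} together with the special-case construction $\mathfrak{P} = \{\mathcal{P}_1, \dots, \mathcal{P}_n\}$ described above. By \Cref{prop:imec}, it suffices to exhibit a procedure that, at each iteration $j$, computes $\arg\max_{i} \mathcal{H}(\gamma(B_{\mathcal{P}_i} \mid Y_{1:j-1}))$ in time polynomial in $\max_j |\mathbb{Y}_j|$, $\max_i |\mathbb{X}_i|$, $m$, $n$. The crux is that, under \Cref{ass:fact}, the running posterior maintained by \Cref{alg:fimec1} stays factored, so each quantity $\gamma(B_{\mathcal{P}_i} \mid Y_{1:j-1}) = \gamma(X_i \mid Y_{1:j-1})$ is available directly from a compact representation, and there are only $n$ candidates to compare.

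First I would establish, by induction on $j$, the invariant
\[
\gamma(X \mid Y_{1:j-1}) = \prod_{i=1}^n \gamma(X_i \mid Y_{1:j-1}),
\]
and that \Cref{alg:fimec1} can store this posterior as the list of its $n$ marginal factors, each of support size at most $\max_i |\mathbb{X}_i|$. The base case $j=1$ is exactly \Cref{ass:fact}. For the inductive step, note that \Cref{alg:fimec1} modifies only the selected factor: it sets $\gamma(X_{i^{\ast}}, Y_j \mid Y_{1:j-1}) \gets \text{MEC}(\gamma(X_{i^{\ast}} \mid Y_{1:j-1}), \nu(Y_j \mid Y_{1:j-1}))$ and leaves $X_i$ for $i \neq i^{\ast}$ independent, so
\[
\gamma(X, Y_j \mid Y_{1:j-1}) = \gamma(X_{i^{\ast}}, Y_j \mid Y_{1:j-1}) \prod_{i \neq i^{\ast}} \gamma(X_i \mid Y_{1:j-1}).
\]
Conditioning on the sampled value $Y_j = y_j$ fixes the second argument of the coupled factor and renormalizes, giving $\gamma(X \mid Y_{1:j}) = \gamma(X_{i^{\ast}} \mid Y_{1:j}) \prod_{i \neq i^{\ast}} \gamma(X_i \mid Y_{1:j-1})$, which is again a product of $n$ per-coordinate factors, the only one having changed being the $i^{\ast}$th. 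Hence the invariant and the compact representation persist, and each iteration updates the representation using only one MEC call (over supports of sizes $|\mathbb{X}_{i^{\ast}}|$ and $|\mathbb{Y}_j|$), one conditional query to $\nu$, and one renormalization, all polynomial in the stated quantities.

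Given the invariant, the maximum-entropy partition step is cheap: $\gamma(B_{\mathcal{P}_i} \mid Y_{1:j-1})$ is just the $i$th stored factor, so $\mathcal{H}(\gamma(B_{\mathcal{P}_i} \mid Y_{1:j-1}))$ can be computed in $O(|\mathbb{X}_i|)$ time, and taking the $\arg\max$ over $i \in \{1, \dots, n\}$ costs $O(n \max_i |\mathbb{X}_i|)$ in total, which is polynomial in $\max_i |\mathbb{X}_i|$ and $n$. Feeding this procedure into \Cref{prop:imec} then yields the claimed bound. I expect the main obstacle to be the bookkeeping in the inductive step: one must verify that the FIMEC update is genuinely a rank-one modification of the factored representation, i.e., that coupling $X_{i^{\ast}}$ with $Y_j$ and subsequently conditioning on $Y_j$ does not introduce dependence among the $X_i$, and that the MEC subroutine, the conditional distribution $\nu(Y_j \mid Y_{1:j-1})$, and the sampling step each run in time polynomial in $\max_j |\mathbb{Y}_j|$, so that no hidden dependence on $|\mathbb{X}| = \prod_i |\mathbb{X}_i|$ or $|\mathbb{Y}| = \prod_j |\mathbb{Y}_j|$ slips in. Everything else is a direct specialization of \Cref{prop:imec}.
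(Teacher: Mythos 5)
Your proposal is correct and follows essentially the same route as the paper: invoke \Cref{prop:imec} and show that the maximum-entropy posterior partition can be found in polynomial time because there are only $n$ candidate partitions, each with a posterior of support at most $\max_i |\mathbb{X}_i|$. The only difference is that you spell out the factored-posterior invariant (that conditioning on $Y_j$ after coupling only the $i^{\ast}$th factor preserves the product form), which the paper's terse proof leaves implicit; this is a welcome clarification but not a different argument.
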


Note that FIMEC's polynomial-time guarantee is in contrast to a direct application of both an approximate MEC algorithm and TIMEC, which would each require exponential time in the same quantities.

\section{Iterative Minimum-Entropy Coupling with an Autoregressive Posterior}

Building on our unified framework, we now derive a new IMEC algorithm, which we call autoregressive IMEC (ARIMEC).
ARIMEC improves upon the applicability of FIMEC by eliminating the factorability assumption.
We present ARIMEC in two parts.
First, we introduce the \textit{prefix tree partition set}, which allows us to formally define ARIMEC using \Cref{alg:imec1}.
Second, we detail insights to make a practical implementation of ARIMEC.

\subsection{Mathematical Formalization}

\looseness=-1
In the framing of \Cref{alg:imec1}, the defining characteristic of IMEC algorithms is their partition sets. 
Therefore, to develop an IMEC algorithm of maximal applicability, it is essential to choose a partition set compatible with a universal model of distributions (i.e., one capable of representing any distribution). 
The autoregressive model, which decomposes a distribution over vectors into component-wise conditional distributions via the chain rule of probability, is one such universal model.
This section formalizes ARIMEC using a partition set specifically tailored to align with the tree-like output structure inherent in autoregressive models.

In order to define this partition set, which we call the prefix tree partition set, we first define prefixes.

\begin{figure*}[t]
    \centering
    \includegraphics[width=.9\linewidth]{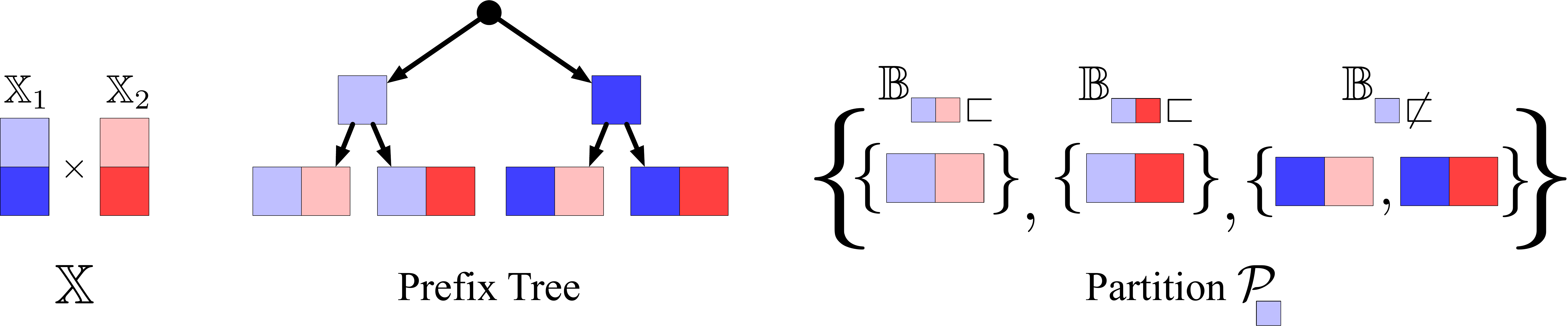}
    \caption{(Left) A set $\mathbb{X}$ of sequences of length 2; (middle) the prefix tree for $\mathbb{X}$; (right) the partition induced by the left-most depth one node of the prefix tree.}
    \label{fig:ar-vis}
\end{figure*}

\begin{definition}[Prefix/Extension]
We write $v \sqsubset v'$ to mean that $v$ is a prefix of $v'$ in the substring sense and, equivalently, that $v'$ is an extension of $v$ in the substring sense.
\end{definition}
\begin{definition}[Immediate Prefix/Extension]
We say $v$ is the immediate prefix of $v'$ and, equivalently, that $v'$ is the immediate extension of $v$, if $v \sqsubset v'$ and $v'$ is one character longer than $v$.
\end{definition}

Next, we define the \textit{prefix tree} of a set of vectors.
The prefix tree is a directed graph over prefixes of vectors with edges pointing to immediate extensions, as stated below.
(Note that our usage of the term is graph theoretic and does not pertain to the trie data structure.)

\begin{definition}[Prefix Tree]
The prefix tree for a set of vectors $\mathbb{X}$ is a directed graph $(\mathbb{V}, \mathbb{E})$, where
the vertex set 
\[\mathbb{V}= \{v \sqsubset x \mid x \in \mathbb{X}\}\] is the set of prefixes of elements of $\mathbb{X}$ and the set of edges 
\[\mathbb{E}=\{(v, c) \mid v, c \in \mathbb{V}, c \text{ is an immediate extension of } v\}\] is the set of pairs of vertices and their immediate extensions.
For distinct vertices $v, u$, we use the notation $\mathbb{V}_{v \to u}$ to mean the subset of $\mathbb{V} \setminus \{v\}$ touchable by paths that start from $v$ and contain $u$.
\end{definition}
We can view each vertex $v$ in the prefix tree as partitioning $\mathbb{X}$ in a manner that aligns with the sampling paths of autoregressive models.
This perspective naturally induces what we call the prefix tree partition set, where each partition corresponds to a prefix upon which an autoregressive model could be conditioned.

\begin{definition}[Prefix Tree Partition Set]
Let $(\mathbb{V}, \mathbb{E})$ be the prefix tree for $\mathbb{X}$.
Then the \textbf{prefix tree partition set} is defined as $\mathfrak{P} = \{\mathcal{P}_v \mid v \in \mathbb{V}\}$,
where
\[\mathcal{P}_v = \{\mathbb{B}_{c \sqsubset} \mid (v, c) \in \mathbb{E}\} \cup \{\mathbb{B}_{v \not \sqsubset}\} \cup \{ \mathbb{B}_{v =}\},\]
and where
\begin{itemize}[nosep,leftmargin=*]
    \item $\mathbb{B}_{c\sqsubset}= \{x \in \mathbb{X} \mid c \sqsubset x\}$ denotes the subset of $\mathbb{X}$ that is an extension of the child $c$;
    \item $\mathbb{B}_{v \not \sqsubset}= \{x \in \mathbb{X} \mid v \not \sqsubset x\}$ denotes the subset of $\mathbb{X}$ that does not extend $v$;
    \item $\mathbb{B}_{v =} = \{x \in \mathbb{X} \mid v = x\}$ denotes the (either singleton or empty) subset of $\mathbb{X}$ equal to $v$.
\end{itemize}
For distinct vertices $v, u$, we use the notation $\mathbb{B}_{v \to u}$ to mean the subset of $\mathcal{P}_v \setminus \{\mathbb{B}_{v =}\}$ touchable by paths that start from $v$ and contain $u$.
\end{definition}
A visualization of the prefix tree and one partition that it induces is shown in Figure \ref{fig:ar-vis}.

Having defined the prefix tree partition set, ARIMEC's formalization is immediate.

\begin{definition}[ARIMEC] \label{def:arimec}
\textbf{ARIMEC} is the instance of \Cref{alg:imec1} in which the set of partitions $\mathfrak{P}$ is selected to be the prefix tree partition set.
\end{definition}

\begin{figure*}[t!]
    \centering
    \includegraphics[width=0.85\textwidth]{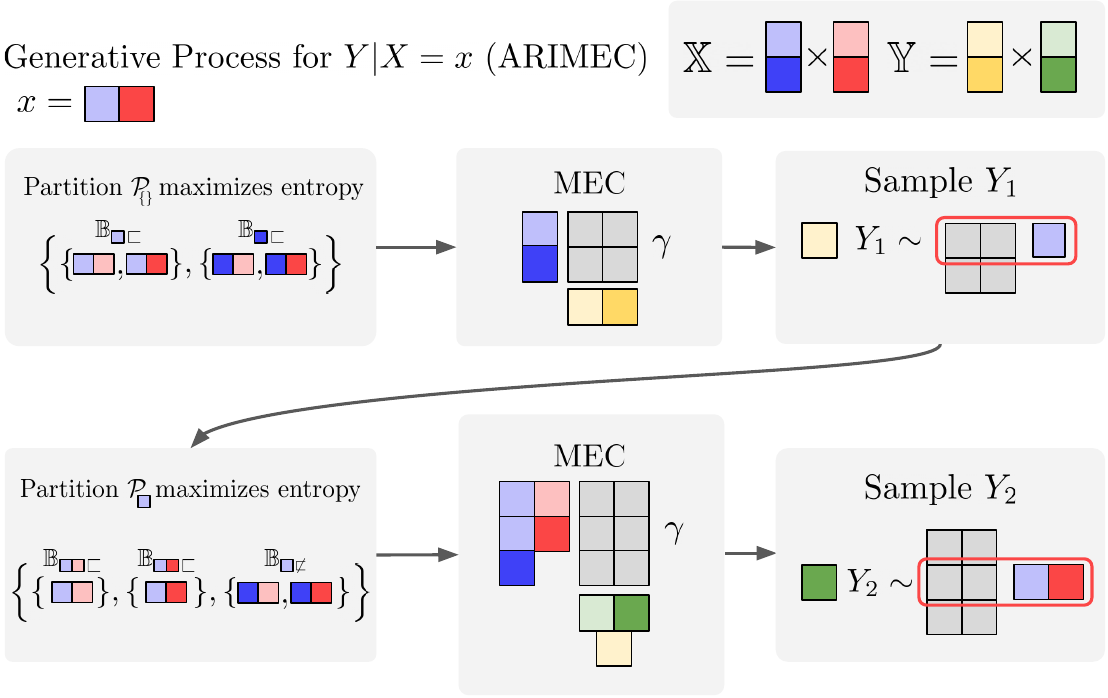}
    \caption{Visualization of two iterations of ARIMEC.}
    \label{fig:vis_arimec}
\end{figure*}

\begin{figure*}[t!]
    \centering
    \includegraphics[width=0.3\textwidth]{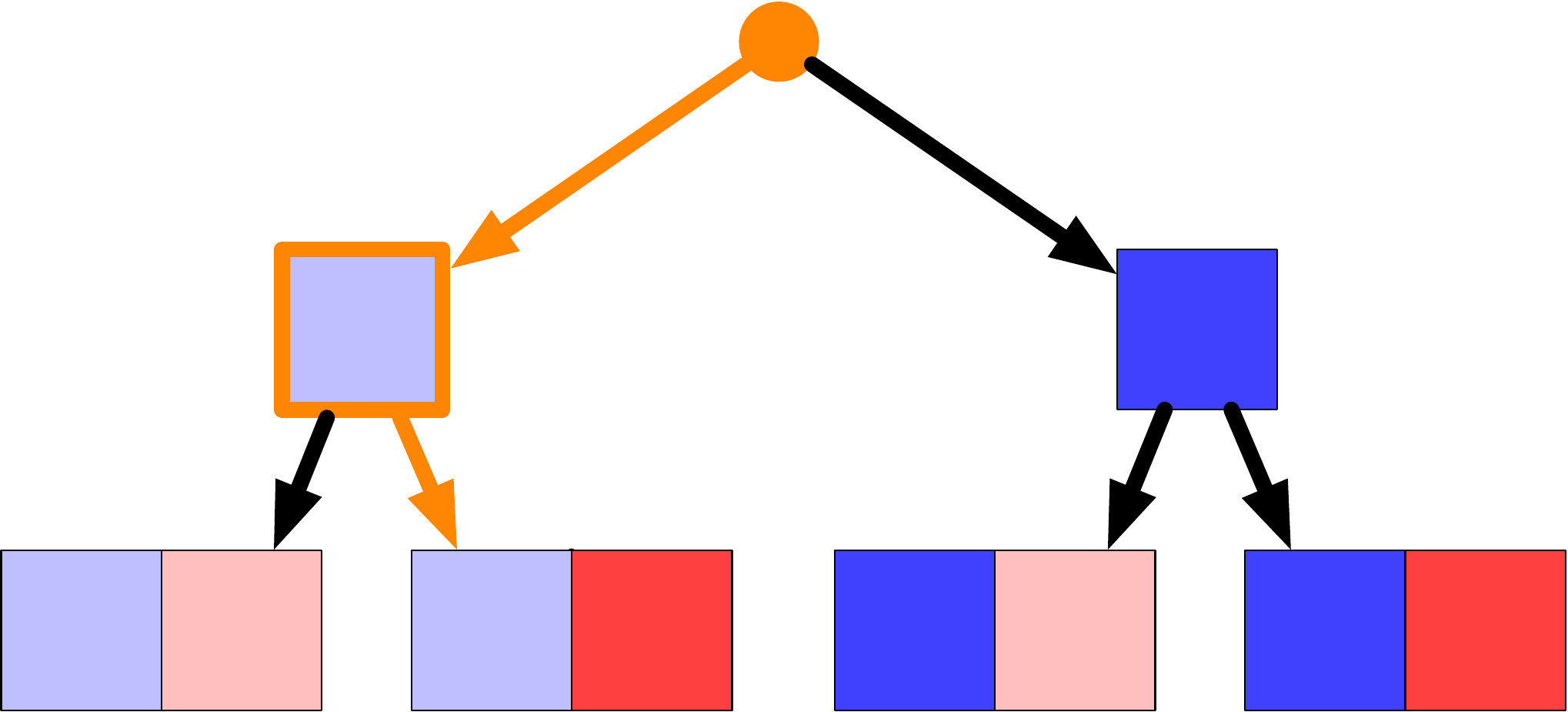}
    \caption{Path shown down the prefix tree corresponding to the procedure in \Cref{fig:vis_arimec}.}
    \label{fig:arimec_tree}
\end{figure*}

ARIMEC can be thought of as, at each iteration, operating at a working prefix $v$ whose associated partition $\mathcal{P}_v$ maximizes posterior entropy.
As information is gained, the likelihood of one the blocks (such as some $\mathbb{B}_{c \sqsubset}$) will become large.
As a result, the entropy associated with the working prefix's partition $\mathcal{P}_v$ will become small, causing the (entropy-maximizing) working prefix to change---often to a child of the existing working prefix.
Over iterations, the working prefix will tend to traverse downward in the prefix tree toward the true value of $X$.
However, it is also possible for it to move upward if the probability of $\mathbb{B}_{\not \sqsubset v}$ (for working prefix $v$) becomes large.
This backtracking mechanism allows ARIMEC to recover from cases in which the working prefix deviates from prefixes of $X$.

We provide visual intuition for ARIMEC in \Cref{fig:vis_arimec}, showing example iterations for marginals of length two and its corresponding path down the tree in \Cref{fig:arimec_tree}.

\subsection{Efficient Implementation}

While \Cref{def:arimec} formalizes ARIMEC at a mathematical level, constructing a practical implementation is challenging due to the exponentially large number of nodes in the prefix tree, which makes naive maximum-entropy posterior partition computations intractable.
To address this challenge, we propose a procedure that seeks to prove a maximum-entropy partition by searching over as few partitions as possible, while lazily and (provably) efficiently computing the posterior (and posterior entropy) of each partition that it does search.
This procedure has two components.
The first is a polynomial time algorithm for lazily computing posteriors (and posterior entropies) for particular partitions.
The second is a search procedure for finding a maximum-entropy partition that prunes partitions that are provably not maximum entropy.
In practice, we observe that the procedure is highly efficient, often only requiring the evaluation of one or two nodes to prove a maximum-entropy partition, though we do not formally prove its runtime complexity; see \Cref{app:search} for further details.

\paragraph{Posterior Updates}

The core idea behind our approach to posterior updates is that, given an updated posterior for the partition associated to node $v$, we can immediately derive the posterior of the partition for any adjacent node $u$.
The posterior over $\mathcal{P}_u$ is dictated by two rules.
First, that \[\gamma(\mathbb{B}_{u \to v} \mid Y_{1:j}) = 1 - \gamma(\mathbb{B}_{v \to u} \mid Y_{1:j}),\] by the complement law of probability.
Second, that \[\gamma(\mathbb{B} \mid Y_{1:j}) \propto \gamma(\mathbb{B} \mid Y_{1:j-1})\] for $\mathbb{B} \in \mathcal{P}_u \setminus \mathbb{B}_{u \to v}$, as direct evidence about $B_{\mathcal{P}_v}$ does not differentiate between the elements of $\mathcal{P}_u \setminus \mathbb{B}_{u \to v}$.
These ideas are formalized in \Cref{lem:bayes}.

We can compute the posterior for any partition $\mathcal{P}_u$ in polynomial time by iteratively applying \Cref{lem:bayes} to the partitions along the undirected path from $v$ to $u$, as is stated below and proven in \Cref{app:post}.
\begin{restatable}[Posterior Updates]{proposition}{posterior} \label{prop:posterior}
Assume that the posterior over a partition is updated if and only if its corresponding node is touched and that nodes are touched by traversing edges of the tree (i.e., without jumps).
Let $\mathcal{P}_v$ be a partition whose posterior was updated on iteration $j$.
If $v, u$ are neighbors and $u$ was last visited on iteration $j' \leq j$, then the iteration $j$ posterior for any partition $\mathcal{P}_u$ can be computed in polynomial time in $\max_i |\mathbb{X}_i|, n$.
\end{restatable}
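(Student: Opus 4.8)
The plan is to reduce the claim to two ingredients: (i) a single-step update lemma (Lemma~\ref{lem:bayes}, already stated above) that, given the iteration-$j$ posterior over $\mathcal{P}_v$, produces the iteration-$j$ posterior over $\mathcal{P}_w$ for any neighbor $w$ of $v$ in polynomial time; and (ii) a bound on the length of the undirected path in the prefix tree from $v$ to $u$. Granting these, the proposition follows by composing the single-step update along that path and observing that each block probability encountered is, by the hypothesis on touching, either the value stored from when the node was last visited (iteration $j'\le j$) or is derived from a neighbor's freshly-updated value — so no stale intermediate posteriors are needed beyond what the traversal hypothesis already guarantees.

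First I would make precise what data is available. By hypothesis, posteriors are updated exactly when their node is touched, and nodes are touched only by traversing tree edges. Hence when $\mathcal{P}_v$'s posterior is updated on iteration $j$, the algorithm physically moved along an edge into $v$, meaning the neighbor it came from also has an iteration-$j$ posterior, and more generally every node on the walk taken during iteration $j$ is current. For an arbitrary neighbor $u$ last visited on iteration $j' \le j$, I would argue that the only new evidence accrued between iterations $j'$ and $j$ that is relevant to $\mathcal{P}_u$ is summarized by a single number: $\gamma(\mathbb{B}_{u \to v} \mid Y_{1:j})$, the current probability mass of the block of $\mathcal{P}_u$ that points toward $v$. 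This is exactly the content of the two rules preceding Lemma~\ref{lem:bayes}: the mass of $\mathbb{B}_{u\to v}$ is the complement of $\gamma(\mathbb{B}_{v \to u}\mid Y_{1:j})$ (which we can read off from $\mathcal{P}_v$'s current posterior, since $\mathbb{B}_{v\to u}$ is a union of blocks of $\mathcal{P}_v$), and every other block of $\mathcal{P}_u$ keeps its iteration-$j'$ relative proportions because evidence about $B_{\mathcal{P}_v}$ cannot distinguish among them. So one application of Lemma~\ref{lem:bayes} — renormalizing the stored iteration-$j'$ block probabilities of $\mathcal{P}_u \setminus \mathbb{B}_{u\to v}$ to sum to $1 - \gamma(\mathbb{B}_{u\to v}\mid Y_{1:j})$ — yields the iteration-$j$ posterior over $\mathcal{P}_u$.

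Next I would bound the work. Each partition $\mathcal{P}_w$ has at most $\max_i|\mathbb{X}_i| + 2$ blocks (one per child of $w$ plus $\mathbb{B}_{w\not\sqsubset}$ and $\mathbb{B}_{w=}$), since the out-degree of any node in the prefix tree is at most the alphabet size. Extracting $\gamma(\mathbb{B}_{v\to u}\mid Y_{1:j})$ from $\mathcal{P}_v$'s posterior is a sum over the blocks of $\mathcal{P}_v$ comprising $\mathbb{B}_{v\to u}$, hence $O(\max_i|\mathbb{X}_i|)$; the renormalization of $\mathcal{P}_u$ is likewise $O(\max_i|\mathbb{X}_i|)$; and computing the posterior entropy is the same order. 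Since $v$ and $u$ are neighbors, the path has length one, so a constant number of such steps suffices and the total is $O(\max_i|\mathbb{X}_i|)$, i.e.\ polynomial in $\max_i|\mathbb{X}_i|$ and $n$ as claimed. (The $n$ appears only because representing a prefix/block index costs $O(n)$ space, which does not change the polynomial claim.)

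The main obstacle I anticipate is justifying the "relative proportions are preserved" rule with full rigor — i.e., proving that between iterations $j'$ and $j$ the posterior over $\mathcal{P}_u$ evolves by a pure reweighting of the single block $\mathbb{B}_{u\to v}$ against the rest, with the rest's internal distribution frozen. This requires an induction over the sequence of couplings performed in iterations $j'+1,\dots,j$: one must show that every coupling step in that interval either acts on a partition $\mathcal{P}_w$ for $w$ on the $v$-side of $u$ (so its effect on $\mathcal{P}_u$ funnels entirely through $\mathbb{B}_{u\to v}$) — using that removing $u$ disconnects the tree, and by the traversal-without-jumps hypothesis the working node stayed within the component containing $v$ — or else does not change $\mathcal{P}_u$'s posterior at all. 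Once that structural invariant is in hand, the single-step lemma and the degree bound close the argument immediately.
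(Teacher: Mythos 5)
Your proposal is correct and takes essentially the same route as the paper: reduce the claim to Lemma~\ref{lem:bayes} (one renormalization step per edge), bound the per-step work by the number of blocks, which is $O(\max_i|\mathbb{X}_i|)$. The only minor difference is that the paper also treats non-neighboring $u$ by iterating Lemma~\ref{lem:bayes} along the $v$--$u$ path, which is where the $O(n)$ path-length bound (rather than representation size) brings $n$ into the runtime; your worry about rigorously justifying the ``frozen relative proportions'' rule is exactly the content of Lemma~\ref{lem:bayes}, which the paper proves separately, so it is not a gap here.
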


 \paragraph{Maximum-Entropy Posterior Partition Search}

The core idea behind our search procedure is to prune nodes of the prefix tree whose partitions provably cannot be maximal entropy.
To prune nodes, we make use of an upper bound on entropy.
This upper bound---stated formally in \Cref{lemma:ent_ub}---shows, roughly speaking, that the entropy of any distribution with one sufficiently probable element cannot exceed the entropy of the distribution that would divide the remaining mass uniformly.

We can apply this upper bound on large numbers of nodes simultaneously:
If a prefix $v$ is unlikely, then $\gamma(\mathbb{B}_{u \to v} \mid Y_{1:j})$ will be large for every $u$ in the subtree rooted at $v$.
On the other hand, if a prefix $v$ is likely, then $\gamma(\mathbb{B}_{u \to v} \mid Y_{1:j})$ will be large for every $u$ in complement of the subtree rooted at $v$.
We prove this result, stated in \Cref{prop:maxent} below, in \Cref{app:ent}.

 \begin{restatable}[Maximum-Entropy Partition]{proposition}{maxent} \label{prop:maxent}
Let 
\[\kappa \geq \max_{\mathcal{P} \in \mathfrak{P}} |\{\mathbb{B} \in \mathcal{P} \mid \mu(\mathbb{B}) > 0\}|\] be an upper bound on the number of blocks with positive probability.
Define 
\[\mathcal{U} \colon q \mapsto - q \log q - (1 - q) \log \frac{1 - q}{\kappa - 1}.\]
For any neighbor $u$ of $v$, if 
\[\gamma(\mathbb{B}_{v \to u} \mid y_{1:j}) < 1-1/ \kappa,\]
then, for all $u' \in \mathbb{V}_{v \to u}$, \[\mathcal{H}(\gamma(B_{\mathcal{P}_{u'}} \mid y_{1:j})) \leq \mathcal{U}(\gamma(\mathbb{B}_{u \to v} \mid y_{1:j})).\]
\end{restatable}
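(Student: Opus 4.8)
The plan is to reduce \Cref{prop:maxent} to the scalar entropy bound \Cref{lemma:ent_ub} by showing that every partition $\mathcal{P}_{u'}$ with $u' \in \mathbb{V}_{v \to u}$ has a single block carrying mass at least $\gamma(\mathbb{B}_{u \to v} \mid y_{1:j})$, and then invoking monotonicity of $\mathcal{U}$ on the relevant range. First I would unpack what $u' \in \mathbb{V}_{v \to u}$ means geometrically: these are exactly the vertices lying in the subtree hanging off $v$ in the direction of $u$ (equivalently, $u$ is on the undirected path from $v$ to $u'$, or $u = u'$). The key structural observation is that for any such $u'$, the partition $\mathcal{P}_{u'}$ contains the block $\mathbb{B}_{u' \to v}$ — the block ``pointing back toward $v$'' — and this block \emph{contains} the entire set $\mathbb{B}_{u \to v}$, because any path from $u'$ toward $v$ must pass through $u$. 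I would make this containment precise from the definitions of $\mathbb{B}_{v \to u}$ and the prefix-tree partition: walking from $u'$ back toward $v$, the first step leaves $u'$ in some block $\mathbb{B}_{u' \to v} \in \mathcal{P}_{u'}$, and since $u$ lies further along that same direction, $\mathbb{B}_{u \to v} \subseteq \mathbb{B}_{u' \to v}$. A small subtlety: I need to handle the degenerate case $u' = u$ (where the block is literally $\mathbb{B}_{u \to v}$, giving equality) and also keep track of the $\mathbb{B}_{v=}$ block, which the notation $\mathbb{B}_{\cdot \to \cdot}$ is defined to exclude — but it only ever pulls mass away from the block of interest, so it does not break the inequality direction.

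From this containment I get $\gamma(\mathbb{B}_{u' \to v} \mid y_{1:j}) \geq \gamma(\mathbb{B}_{u \to v} \mid y_{1:j})$ for every $u' \in \mathbb{V}_{v \to u}$. Next I would translate the hypothesis $\gamma(\mathbb{B}_{v \to u} \mid y_{1:j}) < 1 - 1/\kappa$ into a lower bound on $\gamma(\mathbb{B}_{u \to v} \mid y_{1:j})$: by the complement law (\Cref{lem:bayes}, first rule) $\gamma(\mathbb{B}_{u \to v} \mid y_{1:j}) = 1 - \gamma(\mathbb{B}_{v \to u} \mid y_{1:j}) > 1/\kappa$. So each $\mathcal{P}_{u'}$ has a distinguished block whose probability $q_{u'} := \gamma(\mathbb{B}_{u' \to v} \mid y_{1:j})$ satisfies $q_{u'} \geq \gamma(\mathbb{B}_{u \to v}\mid y_{1:j}) > 1/\kappa$.

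Now I apply \Cref{lemma:ent_ub}: a distribution on at most $\kappa$ positive-probability outcomes, one of which has mass $q \geq 1/\kappa$, has entropy at most $\mathcal{U}(q) = -q\log q - (1-q)\log\frac{1-q}{\kappa-1}$, which is the entropy of the distribution putting $q$ on that outcome and spreading $1-q$ uniformly over the other $\kappa - 1$. (I should double-check that $\mathcal{P}_{u'}$ indeed has at most $\kappa$ positive blocks — this is exactly the defining property of $\kappa$, applied to $\mathcal{P}_{u'} \in \mathfrak{P}$.) This gives $\mathcal{H}(\gamma(B_{\mathcal{P}_{u'}} \mid y_{1:j})) \leq \mathcal{U}(q_{u'})$. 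Finally I need $\mathcal{U}$ to be nonincreasing on $[1/\kappa, 1]$, so that $q_{u'} \geq \gamma(\mathbb{B}_{u\to v}\mid y_{1:j})$ yields $\mathcal{U}(q_{u'}) \leq \mathcal{U}(\gamma(\mathbb{B}_{u \to v}\mid y_{1:j}))$; differentiating, $\mathcal{U}'(q) = \log\frac{1-q}{q(\kappa-1)} \leq 0$ precisely when $q \geq 1/\kappa$, which is exactly the range the hypothesis secures. Chaining the two inequalities gives the claim.

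The main obstacle I anticipate is the purely combinatorial bookkeeping in the first paragraph — rigorously establishing $\mathbb{B}_{u\to v} \subseteq \mathbb{B}_{u'\to v}$ uniformly over all $u' \in \mathbb{V}_{v\to u}$ from the prefix-tree partition definitions, including the two qualitatively different cases (the subtree rooted at $v$ in direction $u$ when $v \sqsubset u$, versus $\mathbb{B}_{v \not\sqsubset}$ when $u$ is the parent direction), and making sure the $\mathbb{B}_{v=}$ singleton is accounted for. Once that inclusion is nailed down, the analytic part (complement law, \Cref{lemma:ent_ub}, monotonicity of $\mathcal{U}$) is routine.
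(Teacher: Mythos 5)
Your proposal is correct and follows essentially the same route as the paper's proof: translate the hypothesis via the complement law into $\gamma(\mathbb{B}_{u \to v} \mid y_{1:j}) > 1/\kappa$, use the containment $\mathbb{B}_{u \to v} \subseteq \mathbb{B}_{u' \to v}$ to lower-bound the mass of the distinguished block of $\mathcal{P}_{u'}$, and invoke \Cref{lemma:ent_ub}. The only cosmetic difference is that you re-derive the monotonicity of $\mathcal{U}$ on $[1/\kappa,1)$, which is already folded into the statement of \Cref{lemma:ent_ub} (it allows any $q$ with $1/\kappa \leq q \leq \mu(x^{\ast})$), so the paper applies the lemma directly at $q = \gamma(\mathbb{B}_{u \to v} \mid y_{1:j})$.
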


Using \Cref{prop:maxent}, we can prove a maximum-entropy partition by searching only over the nodes for which we cannot prove an upper bound that is smaller than a previously observed entropy, as is described in \Cref{alg:maxentpartition}.

\begin{algorithm}[t]
    \caption{Maximum-Entropy Partition Search} \label{alg:maxentpartition}
    \begin{algorithmic}
        \Procedure{MaxEntPartition}{$v, \gamma(\cdot \mid Y_{1:j})$}
        \State $\text{queue} \gets [\mathcal{P}_v]$
        \While{$\text{queue non-empty}$}
        \State $\mathcal{P}_{u} \gets \text{queue.pop}()$
        \If{$\mathcal{H}(\gamma(B_{\mathcal{P}_{u}} \mid Y_{1:j}))$ is max ent so far}
        \State $\text{max ent partition} \gets \mathcal{P}_{u}$
        \EndIf
        \For{each node $u'$ adjacent to $u$}
        \State $q \gets \gamma(\mathbb{B}_{u \to u'} \mid Y_{1:j})$
        \If{$q > 1 - \frac{1}{ \kappa}$ \text{or} $\mathcal{U}(q)  > \text{max ent so far}$}
        \State $\text{queue.append($\mathcal{P}_{u'}$)}$
        \EndIf
        \EndFor
        \EndWhile
        \State return $\text{max ent partition}$
        \EndProcedure
    \end{algorithmic}
\end{algorithm}

\section{Mitigating Entropy Waste via Merging} \label{sec:waste}

One suboptimality of ARIMEC, and more generally of all IMEC algorithms, results from the fact that, on a particular iteration, it may be the case that 
\[\mathcal{H}(\nu(Y_j \mid Y_{1:j-1})) - \mathcal{H}(\gamma(B_{\mathcal{P}} \mid Y_{1:j-1})) > 0.\]
In such a case, IMEC necessarily wastes at least $\mathcal{H}(\nu(Y_j \mid Y_{1:j-1})) - \mathcal{H}(\gamma(B_{\mathcal{P}} \mid Y_{1:j-1}))$ bits of information because $Y_j$ possesses more information than is necessary to encode $B_{\mathcal{P}}$.
This waste can stem from bad hyperparameter selection (i.e., the partitions in $\mathfrak{P}$ are low entropy) or reduced uncertainty about $X$ due to previous approximate MECs.
While the latter is typically desirable (as it indicates we've achieved low conditional entropy), the former can negatively impact performance \citep{witt2023perfectly}. 

To address this, we introduce a technique that we call merging.
At each iteration $j$, after performing a coupling, merging groups the possible realizations of $Y_j$ by the posterior update over $B_{\mathcal{P}}$ they induce.
Instead of sampling a realization of $Y_j$, merging samples one of these groups. If the sampled group contains multiple elements, merging performs an additional coupling between the posterior over that group and the new maximum-entropy partition. This process repeats until the sampled group consists of a single element, at which point iteration $j+1$ begins.

\paragraph{An Example of Merging} To illustrate this process, consider a case in which 
\[\nu(Y_j \mid Y_{1:j-1}) = [1/4, 1/4, 1/2],\] \[\gamma(B_{\mathcal{P}} \mid Y_{1:j-1}) = [1/2, 1/2].\]
Then the following is a minimum-entropy coupling:
\begin{center}
\begin{tabular}{| c | c | c | c|}
\hline
 & $y_1$ & $y_2$ & $y_3$\\
\hline
$b_1$ & 1/4 & 1/4 & 0\\
$b_2$ & 0 & 0 & 1/2\\
\hline
\end{tabular}.
\end{center}
In this coupling, the posterior over $B_{\mathcal{P}}$ remains the same whether $Y_j$ is realized as $y_1$ or $y_2$ (specifically, the probability of $b_1$ is one), indicating wasted entropy over ${y_1, y_2}$. Merging post-processes such couplings to yield:
\begin{center}
\begin{tabular}{| c | c | c |}
\hline
 & $\{y_1, y_2\}$ & $\{y_3\}$\\
\hline
$b_1$ & 1/2 & 0\\
$b_2$ & 0 & 1/2\\
\hline
\end{tabular}.
\end{center}
Under merging, if the group $\{y_1, y_2\}$ is sampled, the subsequent coupling is performed between the new maximum-entropy partition and
\[\nu(Y_j \mid Y_{1:j-1}, Y_j \in \{y_1, y_2\}).\] 
If the sampled group has more than two elements, this process may repeat multiple times before proceeding to iteration $j+1$.

\section{Experiments} \label{sec:exp}

To demonstrate the effectiveness of ARIMEC and merging, we perform experiments in two settings: Markov coding games \citep{pmlr-v162-sokota22a} and steganography~\citep{cachin_perfect}.

\subsection{Markov Coding Games}

\begin{figure}
    \centering
    \includegraphics[width=\linewidth]{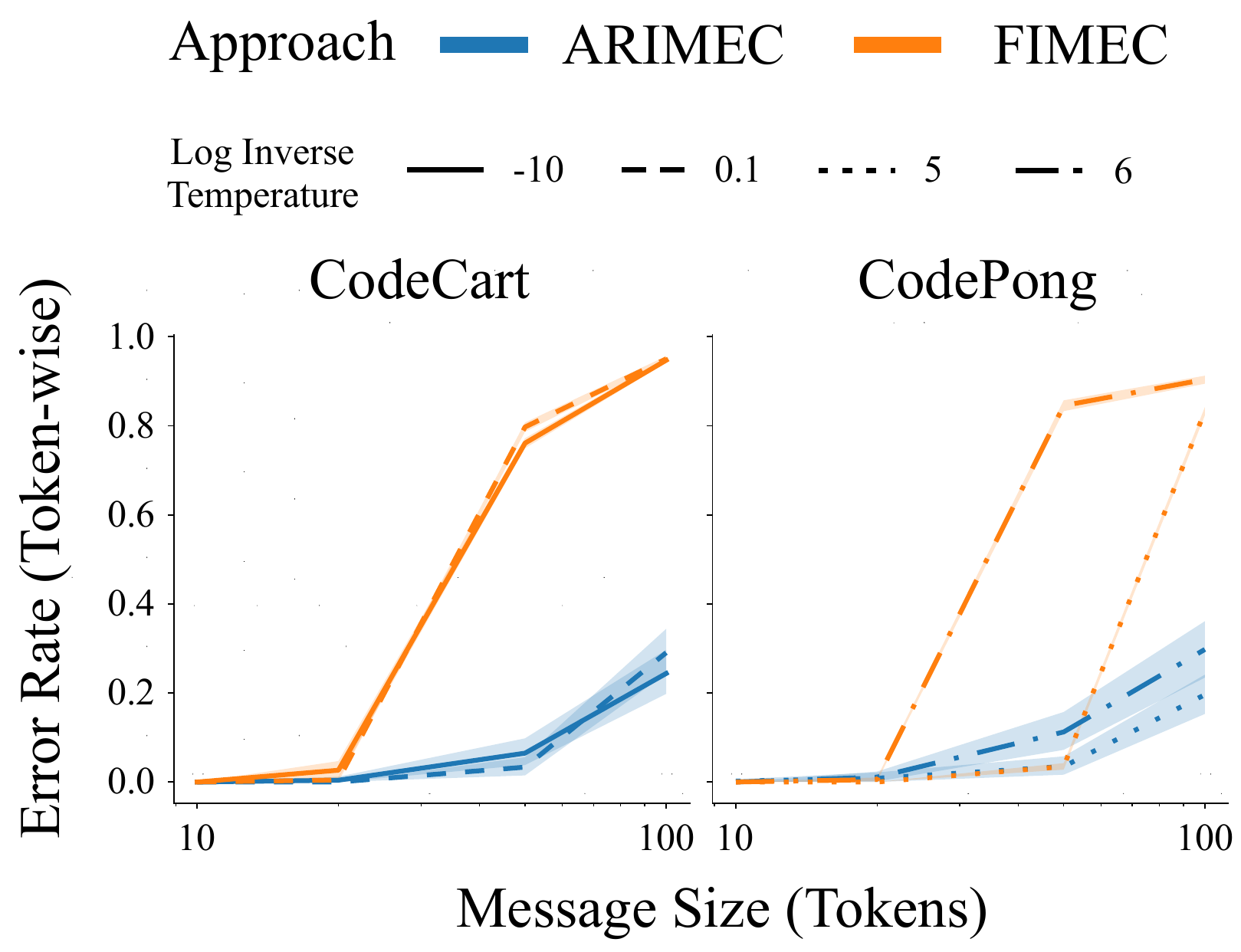}
    \caption{Results for the Markov coding games CodeCart and CodePong using MaxEntRL policies with different temperatures with 95\% bootstrap confidence intervals drawn from 100 games.}
    \label{fig:mcg}
\end{figure}

In a Markov coding game (MCG) \citep{pmlr-v162-sokota22a}, the goal is to communicate messages via the trajectories of a Markov decision process (MDP), while simultaneously achieving a high expected return in the MDP.
Messages are sampled independently from a distribution known to both the player sending them and the player receiving them.
For a more complete description, see \Cref{app:mcg}.

\citet{pmlr-v162-sokota22a} give a principled approach to this setting called MEME that works in two steps.
First, MEME trains a maximum-entropy reinforcement learning (MaxEntRL) \citep{maxentrl} policy for the MDP.
(The intuition is that this policy balances between performing well in the MDP and having high bandwidth through which information can be communicated.)
Second, MEME computes (or approximates) a minimum-entropy coupling between the distribution over messages and, roughly speaking, the distribution over trajectories induced by the MaxEntRL policy.\footnote{To be more precise about the latter distribution requires nuance since environment transitions cannot be correlated with the message. See \citet{pmlr-v162-sokota22a} for details.}
MEME guarantees that the expected return in the MCG is the same as in the MDP; furthermore, at each time step, MEME greedily maximizes the amount of information encoded into the trajectory.
For a more complete description, see 
 \Cref{app:meme}.

Because the second step of MEME requires computing or approximating a MEC, prior to this work, it was only applicable to MCGs whose message distributions had small or factorable supports.
Thus our extension of IMEC to arbitrary distributions also serves as an extension of MEME to arbitrary MCGs.

To illustrate the benefits of MEME's extended applicability, we perform experiments in two MCGs based on Cartpole and Pong \citep{ale}, which we call CodeCart and CodePong, that were previously beyond MEME's applicability.
For these MCGs, the distribution over messages is dictated by GPT-2 \citep{Radford2019LanguageMA} with top-50 sampling.
For each game, we trained two policies with using different entropy bonus temperatures that each achieved perfect scores in 100 of 100 games.
As a baseline, we compare against a naive version of MEME that assumes that the message was sampled from a uniform distribution over tokens and uses FIMEC.
Note that this baseline sacrifices MEME's expected return guarantee.

We show the rate at which trajectories are decoded incorrectly for each variant of IMEC in these settings (Figure \ref{fig:mcg}).
While both FIMEC and ARIMEC maintain perfect expected return in the MDP, ARIMEC produces substantially more efficient encodings.

\subsection{Steganography}

\begin{figure}
    \centering
    \includegraphics[width=\linewidth]{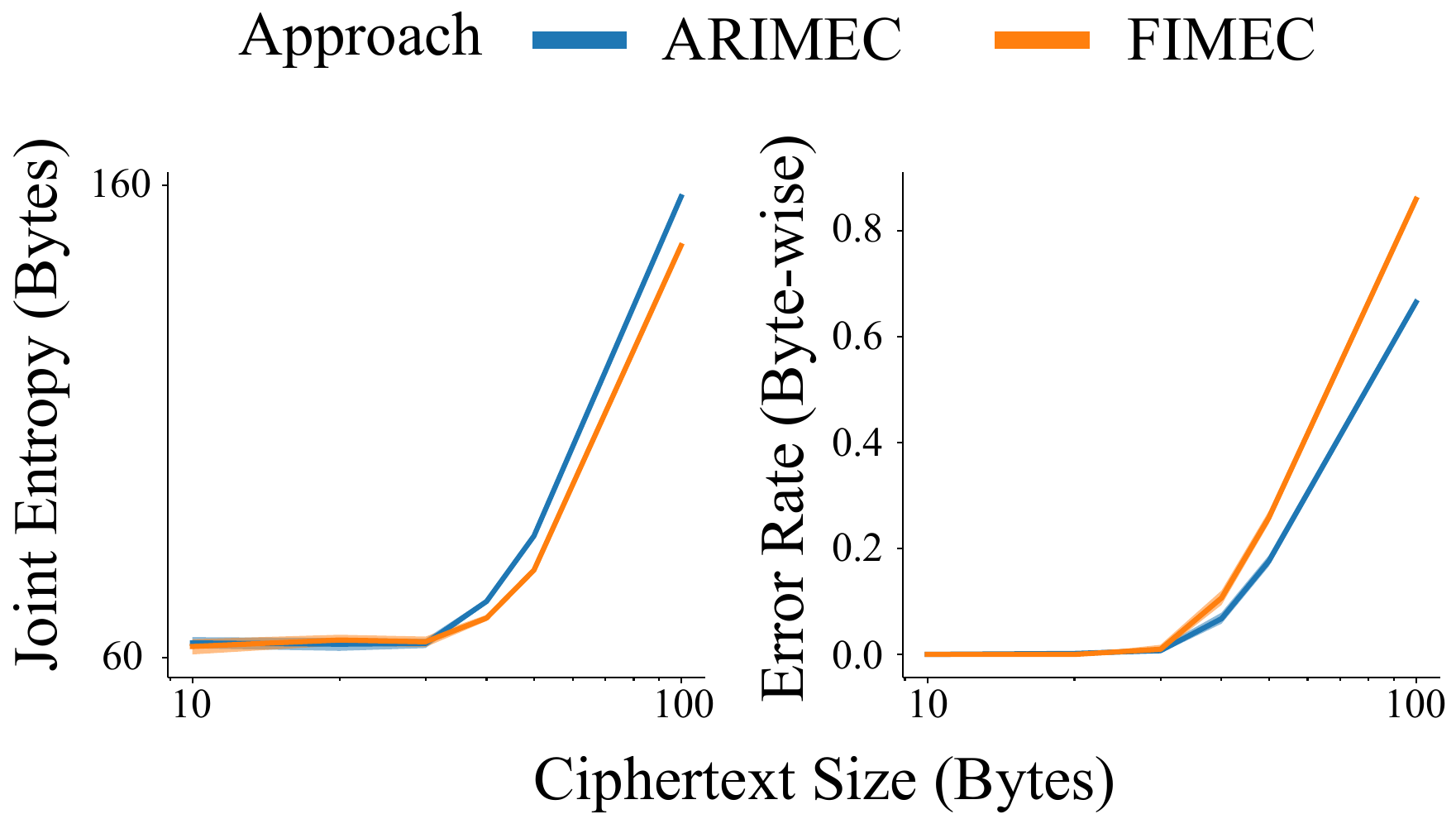}
    \caption{Results for information-theoretic steganography with 95\% bootstrap confidence intervals drawn from 100 samples.}
    \label{fig:stego}
\end{figure}

In steganography, the goal is to encode information (called plaintext) into innocuous-seeming content (called stegotext), such that an adversary would not realize that the innocuous-seeming content contains hidden information.
We consider two kinds of steganography for our experiments.

\paragraph{Information-Theoretic Steganography} The first is information-theoretic steganography \citep{cachin_perfect}, which seeks formal security guarantees.
\citet{witt2023perfectly} proved that this problem can be reduced to minimum-entropy coupling distributions of ciphertext (random bitstrings generated using shared private keys) with distributions of covertext (innocuous content).
For a more complete description, see \Cref{app:its}.

In this setting, \Cref{ass:fact}  holds; thus, we would expect FIMEC to perform well relative to the ARIMEC.
We show both the resulting joint entropy and the rate at which the ciphertext is decoded incorrectly in Figure \ref{fig:stego}, using 100 tokens sampled from GPT-2 as the covertext.
This error rate can be written as 
$\mathbb{E}_{X \sim \mathcal{X}} \mathbb{E}_{Y \sim \gamma(Y \mid X)} I[X \neq \arg \max_{x} \gamma(x \mid Y)]$.
Interestingly, while FIMEC produces lower joint entropy than ARIMEC, ARIMEC appears to produce a lower error rate.
This could be because the ARIMEC focuses on maximizing the certainty of the bytes earlier in the string, while FIMEC focuses on reducing the uncertainty about the most uncertain bytes.

\begin{figure}
    \centering
    \includegraphics[width=.8\linewidth]{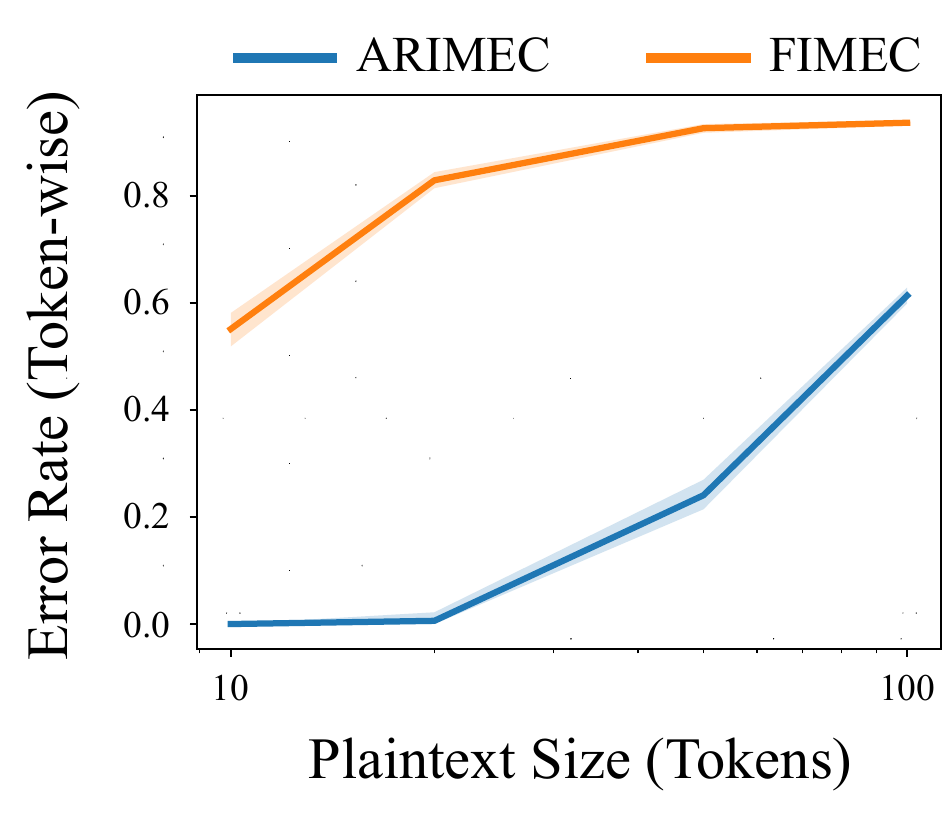}
    \caption{Results for unencrypted steganography with 95\% bootstrap confidence intervals drawn from 100 samples.}
    \label{fig:encoding}
\end{figure}

\paragraph{Linguistic Steganography} 
The second setting is linguistic steganography, a broader concept than language-based information-theoretic steganography. Unlike the latter, linguistic steganography does not necessarily involve shared private keys. While not using private keys results in less robust security guarantees, it offers two significant advantages. First, applicability is widened to settings in which a private key exchange is not possible. Second, potential information throughput is much higher, as the sender can use realistic priors about plaintext messages, which have significantly lower entropy compared to the uniform distribution of ciphertexts in information-theoretic steganography.

In the specific setting we consider, we aim to encode the output of one language model into that of another language model. The first language model acts as a prior on the plaintext messages that the sender may send, while the second language model serves as an approximate covertext distribution. In practice, one could achieve high throughput by first constructing a desired plaintext message and then translating it into a semantically equivalent message with a high likelihood under the prior.

For our experiments, the covertext distribution is generated by sampling 100 tokens from GPT-2 with the prompt ``Here's an innocuous message:'' and the plaintext message distribution is generated by GPT-2 with the prompt ``Here's a secret message:''. We compared ARIMEC with the correct prior against FIMEC under an assumed uniform prior over tokens. The results of this experiment are shown in \Cref{fig:encoding}. Our findings indicate that ARIMEC substantially outperforms FIMEC in terms of information throughput, reflecting its ability to leverage the prior.

\subsection{Merging}

To evaluate the performance of our merging technique, we consider a setting in which the objective is to transmit 10 bytes of ciphertext via GPT-2 stegotext.
The results of this experiment, which we conducted using FIMEC, are depicted in \Cref{fig:raw_vals}. 
The y-axis represents the joint entropy in bits, while the x-axis shows the dimension of the random vector $X$---i.e., $n$ in our notation.

As discussed in \Cref{sec:waste}, poor choices of partition sets can negatively impact the performance of IMEC. 
In this case, FIMEC's performance significantly decreases (i.e., joint entropy increases) with the number of components, even though the entropy of $X$ is held constant. However, our merging technique substantially reduces IMEC's vulnerability to this issue, as desired.

\begin{figure}
    \centering
    \includegraphics[width=\linewidth]{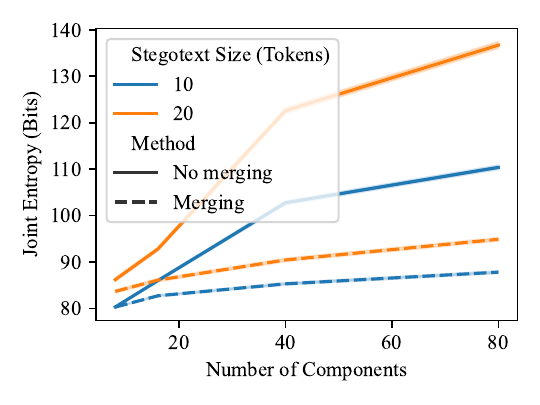}
    \caption{Comparing merging and not merging with 95\% bootstrap confidence intervals drawn from 1000 samples.}
    \label{fig:raw_vals}
\end{figure}

\section{Conclusion and Future Work}

In this work, we investigated the problem of computing low-entropy couplings for large-support distributions, making four main contributions.
First, we unify existing algorithms under the formalism of partition sets.
Second, using this unified perspective, we introduce ARIMEC---the first approach to computing low-entropy couplings for large-support distributions that can be applied to arbitrary distributions.
Third, we increase the robustness of IMEC algorithms to the choice of partition set by introducing a merging technique.
Finally, we empirically show the utility of these innovations in MCG and steganography applications.

For future work, there are at least two application directions in which it would be interesting to push further with ARIMEC and merging.
First is linguistic steganography.
This direction is promising because ARIMEC can achieve high throughput rates, as we observed in \Cref{fig:encoding}, and because of the recent proliferation of effective language models.
Thus, there may be real-world settings in which it is applicable.
Second, because ARIMEC is the first IMEC algorithm capable of handling arbitrary discrete distributions, it potentially opens the door to using large-support distributions for other minimum-entropy coupling applications in which the distributions may not be factorable, such as entropic causal inference, random number generation, functional representations, and dimensionality reduction.

\section{Acknowledgements}

This work was supported by ONR grant \#N000142212121.

\bibliography{uai2024-template}

\begin{thebibliography}{22}
\providecommand{\natexlab}[1]{#1}
\providecommand{\url}[1]{\texttt{#1}}
\expandafter\ifx\csname urlstyle\endcsname\relax
  \providecommand{\doi}[1]{doi: #1}\else
  \providecommand{\doi}{doi: \begingroup \urlstyle{rm}\Url}\fi

\bibitem[Bellemare et~al.(2013)Bellemare, Naddaf, Veness, and Bowling]{ale}
M.~G. Bellemare, Y.~Naddaf, J.~Veness, and M.~Bowling.
\newblock The arcade learning environment: An evaluation platform for general agents.
\newblock \emph{Journal of Artificial Intelligence Research}, 2013.

\bibitem[Cachin(1998)]{cachin_perfect}
C.~Cachin.
\newblock An information-theoretic model for steganography.
\newblock In \emph{Information Hiding}, 1998.

\bibitem[Cicalese et~al.(2016)Cicalese, Gargano, and Vaccaro]{Cicalese2016}
F.~Cicalese, L.~Gargano, and U.~Vaccaro.
\newblock Approximating probability distributions with short vectors, via information theoretic distance measures.
\newblock In \emph{IEEE International Symposium on Information Theory (ISIT)}, 2016.

\bibitem[Cicalese et~al.(2019)Cicalese, Gargano, and Vaccaro]{Cicalese2019MinimumEntropyCA}
F.~Cicalese, L.~Gargano, and U.~Vaccaro.
\newblock Minimum-entropy couplings and their applications.
\newblock \emph{IEEE Transactions on Information Theory}, 2019.

\bibitem[Compton(2022)]{compton2022tighter}
S.~Compton.
\newblock A tighter approximation guarantee for greedy minimum entropy coupling.
\newblock In \emph{IEEE International Symposium on Information Theory (ISIT)}, 2022.

\bibitem[Compton et~al.(2020)Compton, Kocaoglu, Greenewald, and Katz]{compton2020entropic}
S.~Compton, M.~Kocaoglu, K.~Greenewald, and D.~Katz.
\newblock Entropic causal inference: Identifiability and finite sample results.
\newblock \emph{Advances in Neural Information Processing Systems (NeurIPS)}, 2020.

\bibitem[Compton et~al.(2022)Compton, Greenewald, Katz, and Kocaoglu]{compton2022entropic}
S.~Compton, K.~Greenewald, D.~A. Katz, and M.~Kocaoglu.
\newblock Entropic causal inference: Graph identifiability.
\newblock In \emph{International Conference on Machine Learning (ICML)}, 2022.

\bibitem[Compton et~al.(2023)Compton, Katz, Qi, Greenewald, and Kocaoglu]{compton2023minimumentropy}
S.~Compton, D.~Katz, B.~Qi, K.~Greenewald, and M.~Kocaoglu.
\newblock Minimum-entropy coupling approximation guarantees beyond the majorization barrier.
\newblock In \emph{International Conference on Artificial Intelligence and Statistics (AIStats)}, 2023.

\bibitem[Javidian et~al.(2021)Javidian, Aggarwal, Bao, and Jacob]{Javidian:21}
M.~A. Javidian, V.~Aggarwal, F.~Bao, and Z.~Jacob.
\newblock Quantum entropic causal inference.
\newblock In \emph{Quantum Information and Measurement VI}, 2021.

\bibitem[Kalchbrenner et~al.(2018)Kalchbrenner, Elsen, Simonyan, Noury, Casagrande, Lockhart, Stimberg, Oord, Dieleman, and Kavukcuoglu]{kalchbrenner_efficient_2018}
N.~Kalchbrenner, E.~Elsen, K.~Simonyan, S.~Noury, N.~Casagrande, E.~Lockhart, F.~Stimberg, A.~Oord, S.~Dieleman, and K.~Kavukcuoglu.
\newblock Efficient {Neural} {Audio} {Synthesis}.
\newblock In \emph{{International} {Conference} on {Machine} {Learning} (ICML))}, 2018.

\bibitem[Kocaoglu et~al.(2017)Kocaoglu, Dimakis, Vishwanath, and Hassibi]{Kocaoglu_Dimakis_Vishwanath_Hassibi_2017}
M.~Kocaoglu, A.~Dimakis, S.~Vishwanath, and B.~Hassibi.
\newblock Entropic causal inference.
\newblock \emph{AAAI Conference on Artificial Intelligence (AAAI)}, 2017.

\bibitem[Kovačević et~al.(2015)Kovačević, Stanojević, and Šenk]{KOVACEVIC2015369}
M.~Kovačević, I.~Stanojević, and V.~Šenk.
\newblock On the entropy of couplings.
\newblock \emph{Information and Computation}, 2015.

\bibitem[Li(2021)]{li2021}
C.~T. Li.
\newblock Efficient approximate minimum entropy coupling of multiple probability distributions.
\newblock \emph{IEEE Transactions on Information Theory}, 2021.

\bibitem[Liang et~al.(2023)Liang, Ling, Cheng, Obolenskiy, Liu, Pandey, Wilf, Morency, and Salakhutdinov]{liang2023quantifying}
P.~P. Liang, C.~K. Ling, Y.~Cheng, A.~Obolenskiy, Y.~Liu, R.~Pandey, A.~Wilf, L.-P. Morency, and R.~Salakhutdinov.
\newblock Quantifying interactions in semi-supervised multimodal learning: Guarantees and applications.
\newblock In \emph{International Conference on Learning Representations (ICLR)}, 2023.

\bibitem[Parmar et~al.(2018)Parmar, Vaswani, Uszkoreit, Kaiser, Shazeer, Ku, and Tran]{image_transformer}
N.~J. Parmar, A.~Vaswani, J.~Uszkoreit, L.~Kaiser, N.~Shazeer, A.~Ku, and D.~Tran.
\newblock Image transformer.
\newblock In \emph{International Conference on Machine Learning (ICML)}, 2018.

\bibitem[Radford et~al.(2019)Radford, Wu, Child, Luan, Amodei, and Sutskever]{Radford2019LanguageMA}
A.~Radford, J.~Wu, R.~Child, D.~Luan, D.~Amodei, and I.~Sutskever.
\newblock Language models are unsupervised multitask learners.
\newblock 2019.

\bibitem[Rossi(2019)]{rossi2019}
M.~Rossi.
\newblock Greedy additive approximation algorithms for minimum-entropy coupling problem.
\newblock In \emph{IEEE International Symposium on Information Theory (ISIT)}, 2019.

\bibitem[{Schroeder de Witt} et~al.(2023){Schroeder de Witt}, Sokota, Kolter, Foerster, and Strohmeier]{witt2023perfectly}
C.~{Schroeder de Witt}, S.~Sokota, J.~Z. Kolter, J.~N. Foerster, and M.~Strohmeier.
\newblock Perfectly secure steganography using minimum entropy coupling.
\newblock In \emph{International Conference on Learning Representations (ICML)}, 2023.

\bibitem[Shkel and Yadav(2023)]{shkel2023information}
Y.~Y. Shkel and A.~K. Yadav.
\newblock Information spectrum converse for minimum entropy couplings and functional representations.
\newblock In \emph{IEEE International Symposium on Information Theory (ISIT)}, 2023.

\bibitem[Sokota et~al.(2022)Sokota, {Schroeder De Witt}, Igl, Zintgraf, Torr, Strohmeier, Kolter, Whiteson, and Foerster]{pmlr-v162-sokota22a}
S.~Sokota, C.~{Schroeder De Witt}, M.~Igl, L.~M. Zintgraf, P.~Torr, M.~Strohmeier, Z.~Kolter, S.~Whiteson, and J.~Foerster.
\newblock Communicating via {M}arkov decision processes.
\newblock In \emph{International Conference on Machine Learning (ICML)}, 2022.

\bibitem[Vidyasagar(2012)]{Vidyasagar2012}
M.~Vidyasagar.
\newblock A metric between probability distributions on finite sets of different cardinalities and applications to order reduction.
\newblock \emph{IEEE Transactions on Automatic Control}, 2012.

\bibitem[Ziebart et~al.(2008)Ziebart, Maas, Bagnell, and Dey]{maxentrl}
B.~D. Ziebart, A.~L. Maas, J.~A. Bagnell, and A.~K. Dey.
\newblock Maximum entropy inverse reinforcement learning.
\newblock In \emph{AAAI Conference on Artificial Intelligence (AAAI)}, 2008.

\end{thebibliography}
\bibliographystyle{abbrvnat}

\appendix
\newpage
\onecolumn

\section{Inverse Generative Processes} \label{app:inverse}

\begin{algorithm} 
    \caption{Tabular IMEC: $X \mid Y=y$} \label{alg:timec2}
    \begin{algorithmic}
        \Procedure{TIMEC}{$\mu$, $\nu, y$}
        \State $\gamma(X) \gets \mu(X)$        
        \For{$j=1, \dots, m$}
            \State $\gamma(X, Y_j \mid y_{1:j-1}) \gets \text{MEC}(\gamma(X \mid y_{1:j-1}), \nu(Y_j \mid y_{1:j-1}))$
        \EndFor
        \State $X \sim \gamma(X \mid y)$
        \State return $X$
        \EndProcedure
    \end{algorithmic}
\end{algorithm}

\begin{algorithm} 
    \caption{Factored IMEC: $X \mid Y=y$} \label{alg:fimec2}
    \begin{algorithmic}
        \Procedure{FIMEC}{$\mu$, $\nu, y$}
        \State $\gamma(X) \gets \mu(X)$        
        \For{$j=1, \dots, m$}
            \State $i^{\ast} \gets \text{argmax}_i \mathcal{H}(\gamma(X_i \mid y_{1:j-1}))$
            \State $\gamma(X_{i^{\ast}}, Y_j \mid y_{1:j-1}) \gets \text{MEC}(\gamma(X_{i^{\ast}} \mid y_{1:j-1}), \nu(Y_j \mid y_{1:j-1}))$
            \State $\gamma(X, Y_j \mid y_{1:j-1}) \gets \gamma(X_{i^{\ast}}, Y_j \mid y_{1:j-1}) \cdot \left(\prod_{i \neq i^{\ast}} \gamma(X_i \mid y_{1:j-1})\right)$
        \EndFor
        \State $X \sim \gamma(X \mid y)$
        \State return $X$
        \EndProcedure
    \end{algorithmic}
\end{algorithm}

\begin{algorithm} 
    \caption{IMEC (Generic Form): $X \mid Y=y$} \label{alg:imec2}
    \begin{algorithmic}
        \Procedure{IMEC}{$\mu$, $\nu$, $y$, $\mathfrak{P}$}
        \State $\gamma(X) \gets \gamma(\mu)$
            \For{$j=1, \dots,  m$}
                \State $\mathcal{P} \gets \arg\max_{\mathcal{P} \in \mathfrak{P}} \mathcal{H}(\gamma(B_{\mathcal{P}} \mid y_{1:j-1}))$
                \State $\gamma(B_{\mathcal{P}}, Y_j \mid y_{1:j-1}) \gets \text{MEC}(\gamma(B_{\mathcal{P}} \mid y_{1:j-1}), \nu(Y_j \mid y_{1:j-1}))$
            \EndFor
            \State $X \sim \gamma(X \mid y)$
            \State return $X$
        \EndProcedure
    \end{algorithmic}
\end{algorithm}

\section{Theory} \label{app:proofs}

\subsection{Runtime Complexity} \label{app:complexity}

\imec*
\begin{proof}
Let $M= \max_j |\mathbb{Y}_j|$ and $N = \max_i | \mathbb{X}_i|$.

It suffices to show that each of the operations in the main loop requires only polynomial time, as the main loop runs $m$ times.
\begin{itemize}
    \item By assumption, the maximum-entropy posterior partition requires only polynomial time.
    \item Performing an approximate minimum-entropy coupling on distributions of size $O(\max(M, N))$ requires only polynomial time.
    \item Marginalizing a joint distribution for variables of support $O(M), O(N)$ requires only polynomial time.
    \item Sampling from a distribution of support $O(M)$ requires only polynomial time.
\end{itemize}
\end{proof}

\timec*
\begin{proof}
Let $M= \max_j |\mathbb{Y}_j|$. Per \Cref{prop:imec}, it suffices to show that maximum-entropy posterior partition computation is a polynomial time operation.
Per \Cref{lemma:trivial-partition}, the partition of singletons is always maximum entropy.
Thus, since computing the posterior over $X$ is polynomial time in $M, |\mathbb{X}|$, the result follows.
\end{proof}

\fimec*
\begin{proof}
Let $M= \max_j |\mathbb{Y}_j|$ and $N = \max_i | \mathbb{X}_i|$.
Per \Cref{prop:imec}, it suffices to show that maximum-entropy posterior partition computation is a polynomial time operation.
Since computing the posterior over each block---and the entropy of that posterior---is polynomial in $N, M$, and there are only $n$ blocks, the result follows.
\end{proof}

\subsection{Coupling} \label{app:coupling}
\coupling*
\begin{proof}
We proceed by induction on $m$.
For the base case, consider $m=1$.
Then for any $y \in \mathbb{Y}$
\begin{align}
\sum_{x \in \mathbb{X}} \gamma(x, y)
&= \sum_{x \in \mathbb{X}}\mu(x) \gamma(y \mid x) \label{step:chain-rule-base-case}\\
&= \sum_{\mathbb{B} \in \mathcal{P}^{(1)}} \sum_{x \in \mathbb{B}} \mu(x) \gamma(y \mid \mathbb{B}) \label{step:block-sub}\\
&= \sum_{\mathbb{B} \in \mathcal{P}^{(1)}} \gamma(y \mid \mathbb{B}) \sum_{x \in \mathbb{B}} \mu(x)\\
&= \sum_{\mathbb{B} \in \mathcal{P}^{(1)}} \gamma(y \mid \mathbb{B}) \mu(\mathbb{B})\\
&= \sum_{\mathbb{B} \in \mathcal{P}^{(1)}} \gamma(y, \mathbb{B}) \label{step:chain-rule2-base-case}\\
&= \nu(y), \label{step:coupling}
\end{align}
where $\mathcal{P}^{(m)}$ denotes the partition selected at step $m$. 
Step (\ref{step:chain-rule-base-case}) follows from chain rule; step (\ref{step:block-sub}) follows by construction; step (\ref{step:chain-rule2-base-case}) follows by chain rule; step (\ref{step:coupling}) follows by the definition of a coupling.
Now, assume the result holds up to $m = \bar{m}$, and consider $m = \bar{m} + 1$.
Observe, for any $y \in \mathbb{Y}$
\begin{align}
\sum_{x \in \mathbb{X}} \gamma(x, y)
&= \sum_{x \in \mathbb{X}}\mu(x) \gamma(y_{1:\bar{m}} \mid x) \gamma(y_{\bar{m} + 1} \mid x, y_{1:\bar{m}}) \label{step:chain-rule-induction1}\\
&= \sum_{\mathbb{B} \in \mathcal{P}^{(\bar{m} + 1)}} \sum_{x \in \mathbb{B}} \gamma(y_{1:\bar{m}}, x) \gamma(y_{\bar{m} + 1} \mid \mathbb{B}, y_{1:\bar{m}}) \label{step:construction-induction}\\
&= \sum_{\mathbb{B} \in \mathcal{P}^{(\bar{m} + 1)}} \gamma(y_{\bar{m} + 1} \mid \mathbb{B}, y_{1:\bar{m}}) \sum_{x \in \mathbb{B}} \gamma(y_{1:\bar{m}}, x) \\
&= \sum_{\mathbb{B} \in \mathcal{P}^{(\bar{m} + 1)}} \gamma(y_{\bar{m} + 1} \mid \mathbb{B}, y_{1:\bar{m}}) \gamma(y_{1:\bar{m}}, \mathbb{B})\\
&= \sum_{\mathbb{B} \in \mathcal{P}^{(\bar{m} + 1)}} \gamma(y, \mathbb{B}, y_{1:\bar{m}}) \label{step:chain-rule2-induction}\\
&= \nu(y) \label{step:coupling-induction}.
\end{align}
Step (\ref{step:chain-rule-induction1}) follows from chain rule; step (\ref{step:construction-induction}) follows by construction; step (\ref{step:chain-rule2-induction}) follows by chain rule; step (\ref{step:coupling-induction}) follows by definition of a coupling.
\end{proof}

\subsection{Greediness} \label{app:greediness}

\greediness*
\begin{proof}
Consider that performing a coupling with the partition of singletons (or a partition that it is equivalent up to elements with zero probability) is equivalent to performing a partition with $\mathbb{X}$ itself.
Then, invoking Lemma \ref{lemma:trivial-partition}, it suffices to show that the statement holds for $\mathbb{X}$.

To see this, first recall
\begin{align*}
\mathcal{H}(X, Y) = \mathcal{H}(Y \mid X) + \mathcal{H}(X)
\end{align*}
Because the entropy of $X$ is fixed (as it is determined by its marginal $\mu$), minimum-entropy coupling is equivalent to minimum-conditional-entropy coupling.
Then, note that, by chain rule, we have
\begin{align*}
\mathcal{H}(Y_{1:j} \mid X) = \sum_{k=1}^j \mathcal{H}(Y_k \mid X, Y_{1:k-1}) = \mathcal{H}(Y_j \mid X, Y_{1:j-1}) + \sum_{k=1}^{j-1} \mathcal{H}(Y_k \mid X, Y_{1:k-1}).
\end{align*}
At iteration $j$, all terms below $j$ have already been determined.
Thus, the rightmost summation term is fixed and minimizing $\mathcal{H}(X, Y_{j-1})$ is reduced to minimizing $\mathcal{H}(Y_j \mid X, Y_{1:j-1})$.
By again invoking the equivalence between minimum-entropy coupling and minimum-conditional-entropy coupling, this is equivalent to minimizing $\mathcal{H}(X, Y_j \mid Y_{1:j-1})$, which is exactly what IMEC minimizes at iteration~$j$.
\end{proof}

\subsection{Condition Satisfaction for Special Cases} \label{app:remarks}

\begin{lemma} \label{lemma:trivial-partition}
Let $\mathfrak{P}$ be the set of all partitions over $\mathbb{X}$.
For any distribution over $\mathbb{X}$, any maximum-entropy partition is equivalent to the partition of singletons up to zero-probability elements.
\end{lemma}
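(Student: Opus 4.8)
The plan is to show that among all partitions of $\mathbb{X}$, the partition of singletons maximizes entropy (up to the caveat about zero-probability elements), by exploiting the fact that any coarser partition is a deterministic function of a finer one, and that applying a deterministic function cannot increase entropy.

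First, I would fix an arbitrary distribution $\mu$ over $\mathbb{X}$ and let $\mathcal{P}^\ast$ denote the partition of singletons, so that $B_{\mathcal{P}^\ast}$ is essentially $X$ itself and $\mathcal{H}(\mu(B_{\mathcal{P}^\ast})) = \mathcal{H}(X)$. For an arbitrary partition $\mathcal{P} = \{\mathbb{B}_1, \dots, \mathbb{B}_\ell\}$, observe that the block function $\mathcal{B}_{\mathcal{P}} \colon \mathbb{X} \to \mathcal{P}$ is a deterministic function of $x$, so $B_{\mathcal{P}} = \mathcal{B}_{\mathcal{P}}(X)$ is a deterministic function of $X$. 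The standard fact that $\mathcal{H}(f(X)) \leq \mathcal{H}(X)$ for any deterministic $f$ — which follows, e.g., from $\mathcal{H}(X, f(X)) = \mathcal{H}(X) = \mathcal{H}(f(X)) + \mathcal{H}(X \mid f(X))$ together with nonnegativity of conditional entropy — immediately gives $\mathcal{H}(\mu(B_{\mathcal{P}})) \leq \mathcal{H}(X) = \mathcal{H}(\mu(B_{\mathcal{P}^\ast}))$. Hence the partition of singletons attains the maximum entropy over $\mathfrak{P}$.

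Next I would handle the ``up to zero-probability elements'' clause, which is where the argument needs care. Equality $\mathcal{H}(f(X)) = \mathcal{H}(X)$ holds iff $\mathcal{H}(X \mid f(X)) = 0$, i.e., iff $X$ is almost surely determined by $f(X)$; equivalently, each block $\mathbb{B}_i$ of $\mathcal{P}$ contains at most one element of positive probability. So if $\mathcal{P}$ is another maximum-entropy partition, every block has at most one positive-probability point, which means $\mathcal{P}$ differs from the partition of singletons only by how it groups the zero-probability elements of $\mathbb{X}$ — precisely the sense of ``equivalent up to zero-probability elements.'' I should state a clean definition of this equivalence (two partitions agree after discarding, or arbitrarily reassigning, the measure-zero points) so the conclusion is unambiguous.

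The main obstacle is not any deep inequality — the data-processing/grouping bound on entropy is elementary — but rather pinning down the equality condition cleanly and phrasing the ``up to zero-probability elements'' equivalence rigorously, since the statement is really about characterizing \emph{all} maximizers rather than just exhibiting one. A secondary point to be careful about is that the supremum over all partitions is attained (it is, since there are finitely many partitions of a finite set, so no compactness argument is needed), and that the convention $0 \log 0 = 0$ is in force throughout so that zero-probability blocks contribute nothing to the entropy sum.
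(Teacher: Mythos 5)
Your proof is correct, but it takes a different route from the paper's. The paper argues block-locally: the entropy contribution of a block is $-\gamma(\mathbb{B})\log\gamma(\mathbb{B})$, which is strictly concave in $\gamma(\mathbb{B})$ (and vanishes at $0$), so splitting any block into sub-blocks weakly increases the total entropy, strictly unless one of the pieces has zero probability; hence a maximizer cannot contain a block with two positive-probability elements. You instead argue globally via the grouping/data-processing identity: $B_{\mathcal{P}} = \mathcal{B}_{\mathcal{P}}(X)$ is a deterministic function of $X$, so $\mathcal{H}(B_{\mathcal{P}}) = \mathcal{H}(X) - \mathcal{H}(X \mid B_{\mathcal{P}}) \leq \mathcal{H}(X)$, with equality iff $\mathcal{H}(X \mid B_{\mathcal{P}}) = 0$, i.e., iff every block contains at most one positive-probability element. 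Both arguments are elementary and yield the same conclusion; your version has the advantage of delivering the equality condition (and thus the characterization of \emph{all} maximizers, which is what the ``up to zero-probability elements'' clause asserts) in one clean step, whereas the paper's subdivision argument establishes that characterization only implicitly through the strictness of the concavity. The paper's version, in exchange, is more self-contained in that it never invokes conditional entropy or the chain rule, only the shape of $p \mapsto -p\log p$. Your closing cautions (attainment of the maximum by finiteness of $\mathbb{X}$, the $0\log 0 = 0$ convention, and stating precisely what ``equivalent up to zero-probability elements'' means) are appropriate and, if anything, tighten a point the paper leaves informal.
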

\begin{proof}
Consider a block $\mathbb{B}$ of some partition $\mathcal{P}$ of $\mathbb{X}$.
The entropy that $\mathbb{B}$ contributes is 
\[-\gamma(\mathbb{B}) \log \gamma(\mathbb{B}).\]
The first derivative of this function is
\[- \log \gamma(\mathbb{B}) - 1.\]
The second derivative is
\[- \frac{1}{\gamma(\mathbb{B})}.\]
Since the second derivative is always negative, the contribution of $\mathbb{B}$ to the total entropy is strictly concave.
Thus, further subdividing $\mathbb{B}$ increases its contribution to the total entropy, up to elements with zero probability.
\end{proof}

\subsection{Posterior Updates} \label{app:post}

\begin{restatable}[Posterior Updates]{lemma}{bayes} \label{lem:bayes}
Let $(\mathbb{V}, \mathbb{E})$ be the prefix tree for $\mathbb{X}$.
Assume that the posterior over a partition is updated if and only if its corresponding node is touched and that nodes are touched by traversing edges of the tree (without jumps).
Let $\mathcal{P}_v$ be a partition whose posterior was updated on iteration $j$.
If $v, u$ are neighbors and $u$ was last visited on iteration $j' \leq j$, then
\[\gamma(\mathbb{B}_{u \to v} \mid Y_{1:j}) = 1 - \gamma(\mathbb{B}_{v \to u} \mid Y_{1:j})\]
and, for $\mathbb{B}' \in \mathcal{P}_u$ such that $\mathbb{B}' \neq \mathbb{B}_{u \to v}$,
\[\gamma(\mathbb{B}' \mid Y_{1:j}) \propto \gamma(\mathbb{B} \mid Y_{1:j}).\]
\end{restatable}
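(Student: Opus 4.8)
The plan is to reduce both claimed identities to statements about the two components of the prefix tree obtained by deleting the edge $\{v,u\}$, and then to track how Bayesian updating of the posterior over $X$ behaves along the (jump-free) walk of touched nodes between iterations $j'$ and $j$.

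\emph{Setup and the first identity.} Deleting the undirected edge $\{v,u\}$ from $(\mathbb{V},\mathbb{E})$ splits $\mathbb{V}$ into components $C_v\ni v$ and $C_u\ni u$. I would first check, essentially from the definitions, that $\mathbb{B}_{v\to u}=\mathbb{X}\cap C_u$ and $\mathbb{B}_{u\to v}=\mathbb{X}\cap C_v$: a path from $v$ that contains $u$ is exactly one that steps into $C_u$ across $\{v,u\}$, and on a tree the vertices it can reach (other than $v$, and modulo $\mathbb{B}_{v=}$) are precisely $C_u$; symmetrically from $u$. Since $\mathbb{V}=C_v\sqcup C_u$, the sets $\mathbb{B}_{v\to u}$ and $\mathbb{B}_{u\to v}$ are complementary subsets of $\mathbb{X}$, so, because $\gamma(\cdot\mid Y_{1:j})$ is a probability measure on $\mathbb{X}$, the complement rule gives $\gamma(\mathbb{B}_{u\to v}\mid Y_{1:j})=1-\gamma(\mathbb{B}_{v\to u}\mid Y_{1:j})$ (the right-hand side being available since $\mathbb{B}_{v\to u}$ is a block of $\mathcal{P}_v$, whose posterior was refreshed at iteration $j$). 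I would also record here that the blocks $\mathbb{B}'\in\mathcal{P}_u$ with $\mathbb{B}'\neq\mathbb{B}_{u\to v}$ partition $\mathbb{X}\setminus\mathbb{B}_{u\to v}=\mathbb{B}_{v\to u}$, so each such $\mathbb{B}'$ lies inside $\mathbb{B}_{v\to u}$.

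\emph{Two auxiliary facts for the second identity.} (i) For every node $w\in C_v$, the set $\mathbb{B}_{v\to u}=\mathbb{X}\cap C_u$ is contained in a single block of $\mathcal{P}_w$: for any $x\in C_u$ the tree-path from $w$ to $x$ must cross $\{v,u\}$, hence leaves $w$ along the same edge (the one toward $v$) for all such $x$, so all of $\mathbb{X}\cap C_u$ is classified into one block of $\mathcal{P}_w$ (which, when $w=v$, is $\mathbb{B}_{v\to u}$ itself; and $x\neq w$ there since $C_v\cap C_u=\varnothing$, so $\mathbb{B}_{w=}$ causes no trouble). Minor care is needed for the prefix convention, but this is routine. (ii) Every node touched at iterations $j'+1,\dots,j$ lies in $C_v$: since nodes are touched only by traversing edges and $u$'s last visit was at iteration $j'$, the walk cannot re-enter $C_u$ after iteration $j'$ (that would re-touch $u$, as $\{v,u\}$ is the only $C_v$--$C_u$ edge); hence at iteration $j'+1$ it must step into $C_v$ and, since $v$ is touched at iteration $j$, it remains in $C_v$ through iteration $j$.

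\emph{Combining.} By the construction in \Cref{alg:imec1}, the coupling at iteration $k$ correlates $Y_k$ only with $B_{\mathcal{P}^{(k)}}$, so $\gamma(Y_k{=}y_k\mid X{=}x,Y_{1:k-1})=\gamma(Y_k{=}y_k\mid \mathcal{B}_{\mathcal{P}^{(k)}}(x),Y_{1:k-1})$ depends on $x$ only through $\mathcal{B}_{\mathcal{P}^{(k)}}(x)$. Composing Bayes' rule from iteration $j'$ to $j$,
\[
\gamma(X{=}x\mid Y_{1:j})=\gamma(X{=}x\mid Y_{1:j'})\prod_{k=j'+1}^{j}\frac{\gamma(Y_k{=}y_k\mid X{=}x,Y_{1:k-1})}{\gamma(Y_k{=}y_k\mid Y_{1:k-1})}.
\]
For $x\in\mathbb{B}_{v\to u}$, facts (i)--(ii) make every numerator constant in $x$ and the denominators never depend on $x$, so $\gamma(X{=}x\mid Y_{1:j})=c\,\gamma(X{=}x\mid Y_{1:j'})$ for a single constant $c$. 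Summing over $x\in\mathbb{B}'$ for any block $\mathbb{B}'\in\mathcal{P}_u$ with $\mathbb{B}'\neq\mathbb{B}_{u\to v}$ (a subset of $\mathbb{B}_{v\to u}$ by the Setup) yields $\gamma(\mathbb{B}'\mid Y_{1:j})=c\,\gamma(\mathbb{B}'\mid Y_{1:j'})$, i.e.\ the asserted proportionality of the iteration-$j$ posterior to the iteration-$j'$ posterior, with $c$ pinned down by normalization against the first identity. I expect fact (ii) --- arguing from the lazy-update and no-jump hypotheses that the touched nodes between visits stay within $C_v$ --- to be the main obstacle; the rest is definitional bookkeeping together with the conditional-independence property already used in the proof of \Cref{prop:coupling}.
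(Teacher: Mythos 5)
Your proof is correct and follows essentially the same route as the paper's: the complement rule for the first identity, and for the second, the observation that all partitions updated between iterations $j'$ and $j$ correspond to vertices on the $v$-side of the edge $\{v,u\}$ (since re-entering the $u$-side would re-touch $u$), so their likelihood contributions are constant across $\mathbb{B}_{v\to u}$ and hence rescale every block of $\mathcal{P}_u$ other than $\mathbb{B}_{u\to v}$ by a common factor. You merely spell out the Bayes-factor composition and the component/prefix bookkeeping more explicitly than the paper does.
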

\begin{proof}
First consider that $\mathbb{B}_{u \to v}, \mathbb{B}_{v \to u}$ are pairs of complementary events.
Thus, their probabilities must sum to one by the complement rule.

Now, consider that, if $u$ was last visited on iteration $j'$, it follows that no element of $\mathbb{B}_{v \to u}$ can have been visited since iteration $j'$.
(This follows because every path from $\mathbb{B}_{v \to u}$ to $v$ must touch $u$ by definition of a tree.)
Therefore, every partition updated since $\mathcal{P}_u$ was last updated must correspond to a vertex in $\mathbb{V}_{u \to v}$.
Partitions corresponding to vertices in $\mathbb{V}_{u \to v}$ can only influence the blocks of $\mathcal{P}_u$ via $\mathbb{B}_{v \to u}$.
Thus, because $\mathbb{B}_{v \to u} = \cup_{\mathbb{B} \in \mathcal{P}_u \setminus \mathbb{B}_{u \to v}} \mathbb{B}$, direct evidence about $\mathbb{B}_{v \to u}$ changes the probability of each element of $\mathcal{P}_u \setminus \mathbb{B}_{u \to v}$ by the same factor.
\end{proof}

\posterior*

\begin{proof}
Let $N = \max_i | \mathbb{X}_i|$.
If $u$ is a neighbor of $v$, then, using \Cref{lem:bayes}, the posterior over $\mathcal{P}_u$ can be computed in $O(\max_i |\mathbb{X}_i|)$ time.
If $u$ is not a neighbor of $v$, then we can compute the posterior over $\mathcal{P}_u$ by iteratively applying \Cref{lem:bayes} along the path from $v$ to $u$.
Because path length is upper bounded by $O(n)$, the total time is polynomial in $\max_i |\mathbb{X}_i|, n$.
\end{proof}

\subsection{Entropy Upper Bound} \label{app:ent}

\begin{restatable}[Entropy Upper Bound]{lemma}{entropy_upper_bound}
\label{lemma:ent_ub}
Let $\mu$ be a probability distribution over $\kappa$ elements. Fix any element $\mu(x^{\ast})$. Then for any $q$ such that $\frac{1}{\kappa} \leq q \leq \mu(x^{\ast})$, we have
\[\mathcal{H}(\mu) \leq 
\begin{cases}
-q\log q - (1 - q) \log \frac{(1 - q)}{\kappa - 1} & q \in [1/\kappa, 1)\\
0 & q = 1.
\end{cases}\]
\end{restatable}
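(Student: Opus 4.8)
The plan is to prove the entropy upper bound by a two-step argument: first reduce to the case where the remaining mass is spread uniformly, then apply a single-variable monotonicity argument. Fix the distribution $\mu$ over $\kappa$ elements and the chosen element $x^{\ast}$ with probability $p = \mu(x^{\ast})$. The case $q = 1$ is immediate: if $q = 1$ then $q \leq p \leq 1$ forces $p = 1$, so $\mu$ is a point mass and $\mathcal{H}(\mu) = 0$. So assume $q \in [1/\kappa, 1)$ for the rest.

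The first key step is to bound $\mathcal{H}(\mu)$ in terms of $p$. Write $\mathcal{H}(\mu) = -p \log p - \sum_{x \neq x^{\ast}} \mu(x) \log \mu(x)$. The second sum is the (unnormalized) entropy contribution of the remaining $\kappa - 1$ elements, which carry total mass $1 - p$. By the standard fact that entropy is maximized by the uniform distribution (equivalently, Jensen/concavity of $t \mapsto -t\log t$, or the log-sum inequality), among all ways to distribute mass $1-p$ over $\kappa - 1$ elements the entropy contribution is at most $-(1-p)\log\frac{1-p}{\kappa-1}$. Hence $\mathcal{H}(\mu) \leq f(p)$ where $f(t) = -t\log t - (1-t)\log\frac{1-t}{\kappa-1}$, which is exactly $\mathcal{U}(t)$ from \Cref{prop:maxent}.

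The second key step is to show $f(q) \geq f(p)$, i.e. that $f$ is nonincreasing on $[1/\kappa, 1]$, since then $\mathcal{H}(\mu) \leq f(p) \leq f(q)$ and we are done. Differentiating, $f'(t) = -\log t - 1 + \log\frac{1-t}{\kappa-1} + 1 = \log\frac{1-t}{(\kappa-1)t}$ (in whatever log base is used, the constant factors cancel cleanly). This is $\leq 0$ exactly when $\frac{1-t}{(\kappa-1)t} \leq 1$, i.e. when $1 - t \leq (\kappa-1)t$, i.e. when $t \geq 1/\kappa$. So $f$ is nonincreasing precisely on $[1/\kappa, 1]$, which covers the range where both $q$ and $p$ live (note $p \geq q \geq 1/\kappa$). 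Combining, $\mathcal{H}(\mu) \leq f(p) \leq f(q)$, which is the claimed bound.

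The main obstacle, such as it is, is just being careful about the two reductions lining up: the uniform-maximization step is applied with total mass $1-p$ (not $1-q$), giving the bound $f(p)$, and only then does monotonicity of $f$ on $[1/\kappa,1]$ let us pass from $p$ down to $q$ — this is where the hypothesis $q \leq \mu(x^{\ast})$ is used, and where $q \geq 1/\kappa$ is needed so that we stay on the decreasing branch. One should also check the boundary/degenerate cases (e.g. $p = 1$ making $f(p)$ involve $\log 0$, handled by the separate $q=1$ case; and $1-p = 0$ contributions treated with the convention $0\log 0 = 0$). Everything else is a routine derivative computation, so I would not belabor it.
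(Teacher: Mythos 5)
Your proposal is correct and follows essentially the same route as the paper's proof: bound the non-$x^{\ast}$ mass by the uniform-distribution entropy to get $\mathcal{H}(\mu) \leq \mathcal{U}(\mu(x^{\ast}))$, then show $\mathcal{U}$ is decreasing on $[1/\kappa, 1)$ to pass from $\mu(x^{\ast})$ down to $q$. The only cosmetic difference is that you read off monotonicity directly from the sign of the first derivative, whereas the paper locates the critical point at $1/\kappa$ and invokes the negative second derivative; both arguments are equivalent.
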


\begin{proof}
First note that if $\mu(x^{\ast}) = 1$ then $\mathcal{H}(\mu) = 0$ and the upper bound holds trivially.

Next, consider the case in which $\mu(x^{\ast}) < 1$.
We will show that this upper bound holds in the case when $q = \mu(x^{\ast})$. We first observe that the entropy is given by
\begin{align*}
    \mathcal{H}(\mu) & = -\sum_{x} \mu(x) \log \mu(x)= -\mu(x^{\ast}) \log \mu(x^{\ast}) - \sum_{x \neq x^{\ast}} \mu(x) \log \mu(x)
\end{align*}
Now, we can consider another probability distribution $\mu'$ over $n - 1$ values (everything except $x^{\ast}$), which is given by $\mu'(x) = \frac{\mu(x)}{1 - \mu(x^{\ast})}, \forall x \neq x^{\ast}$. Since entropy is maximized by a uniform distribution, we have that $\mathcal{H}(\mu') \leq -\log(\frac{1}{n-1})$.

We observe that
\begin{align*}
    \mathcal{H}(\mu')  & = -\sum_{x \neq x^{\ast}} \mu'(x) \log \mu'(x) \\
           & = -\frac{1}{(1 - \mu(x^{\ast}))} \sum_{x \neq x^{\ast}} \mu(x) \log \mu'(x) \\
           & = -\frac{1}{(1 - \mu(x^{\ast}))} \sum_{x \neq x^{\ast}} \mu(x) \Bigl(\log \mu(x) - \log(1 - \mu(x^{\ast}))  \Bigr) \\
           & = -\frac{1}{(1 - \mu(x^{\ast}))} \left[\sum_{x \neq x^{\ast}}\Bigl( \mu(x) \log \mu(x)\Bigr) + \log(1 - \mu(x^{\ast}))\right]
\end{align*}
Then, plugging this into the inequality for $\mathcal{H}(\mu')$ gives us that
\begin{align*}
    - \sum_{x \neq x^{\ast}} \mu(x) \log \mu(x) & \leq (1 - \mu(x^{\ast}))\left(-\log\left(\frac{1}{n -1} \right) - \log (1 - \mu(x^{\ast}))\right) \\
      & = - (1 - \mu(x^{\ast}))\log\left(\frac{1 - \mu(x^{\ast})}{n - 1} \right)
\end{align*}
Thus, this gives us that 
\begin{align*}
    \mathcal{H}(\mu) \leq -\mu(x^{\ast}) \log \mu(x^{\ast}) - (1 - \mu(x^{\ast})) \log \left( \frac{1 - \mu(x^{\ast})}{n - 1} \right)
\end{align*}
as desired.

Next, we will show that this upper bound decreases in $q$. We can consider taking the partial derivative with the upper bound with respect to $q$, which gives us that
\begin{align*}
    D_q \left( -q\log q - (1 - q) \log \frac{(1 - q)}{n-1}\right) = -\log q - 1 + \log \frac{(1 - q)}{n-1} + 1 = -\log q + \log \frac{(1 - q)}{n-1}.
\end{align*}
Setting this equal to zero gives us that
\begin{align*}
    \log q - \log \frac{1 - q}{n - 1} & = 0 \\
         \implies q & = \frac{1}{n}.
\end{align*}
Next, we observe that the second derivative of the upper bound with respect to $q$ is given by
\begin{align*}
    D_q D_q \left( -q\log q - (1 - q) \log \frac{(1 - q)}{n-1}\right) = D_q \left(-\log q + \log \frac{(1 - q)}{n-1}\right)
    = \frac{1}{q(q-1)}.
\end{align*}
Thus, this is negative for all values of $0 < q < 1$, which gives us that the upper bound is decreasing in $q$ on the interval $[\frac{1}{n}, 1)$.
Therefore, since it holds for $q = \mu(x^{\ast})$, it must hold for $q \in [1/n, \mu(x^{\ast})]$.
\end{proof}

\maxent*

\begin{proof}
Observe
\begin{align*}
\gamma(\mathbb{B}_{v \to u} \mid Y_{1:j}) &< 1-1/ \kappa\\
\iff - \gamma(\mathbb{B}_{v \to u} \mid Y_{1:j}) &> -1 + 1/\kappa\\
\iff 1- \gamma(\mathbb{B}_{v \to u} \mid Y_{1:j}) &> 1/\kappa\\
\iff \gamma(\mathbb{B}_{u \to v} \mid Y_{1:j}) &> 1/\kappa.
\end{align*}
Fix any $u' \in \mathbb{V}_{v \to u}$.
Then, $\mathbb{B}_{u \to v} \subset \mathbb{B}_{u' \to v}$.
Therefore, we have $\gamma(\mathbb{B}_{u' \to v} \mid Y_{1:j}) \geq \gamma(\mathbb{B}_{u \to v} \mid Y_{1:j})$.
The bound follows from applying Lemma~\ref{lemma:ent_ub}.
\end{proof}

\section{Visualizations} \label{app:vis}

\begin{figure}[H]
    \centering
    \includegraphics[width=\textwidth]{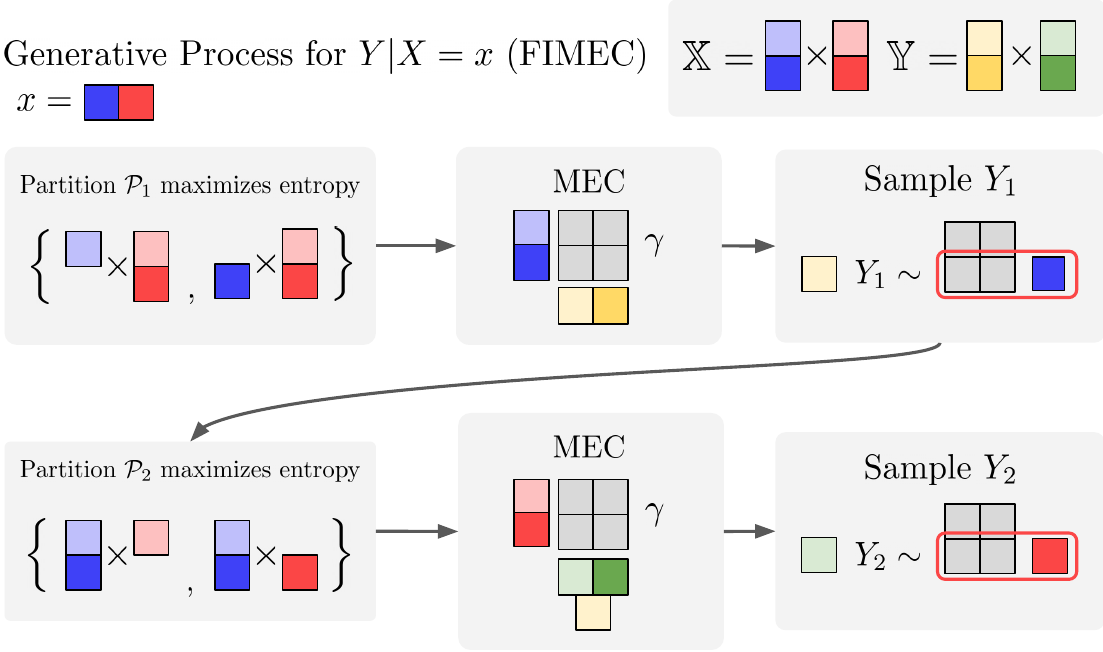}
    \caption{Visualization of two iterations of FIMEC.}
    \label{fig:vis_fimec}
\end{figure}

For comparison to ARIMEC, \Cref{fig:vis_fimec} shows two iterations of FIMEC.

\section{Experiments}

\subsection{Maximum-Entropy Partition Search} \label{app:search}

In \Cref{prop:imec}, we demonstrated that instances of IMEC are efficient if and only if the maximum-entropy posterior partition can be computed efficiently. For ARIMEC, we established in \Cref{prop:posterior} that the posterior of individual nodes can be computed efficiently. However, we did not prove that \Cref{alg:maxentpartition} searches through only a polynomial number of nodes, raising concerns about the practical efficiency of ARIMEC. Fortunately, our empirical observations indicate that the search procedure is highly effective.
To illustrate this, in \Cref{fig:search}, we show the number of nodes the search procedure required, on average, to compute the maximum-entropy posterior partition as a function of the number of nodes in the prefix tree for two distributions: GPT-2 and random bytestrings.
We find that, even as the prefix tree grows very large, average the number of nodes touched per iteration remains manageable.

\begin{figure} \label{fig:search}
    \centering
    \includegraphics[width=0.6\linewidth]{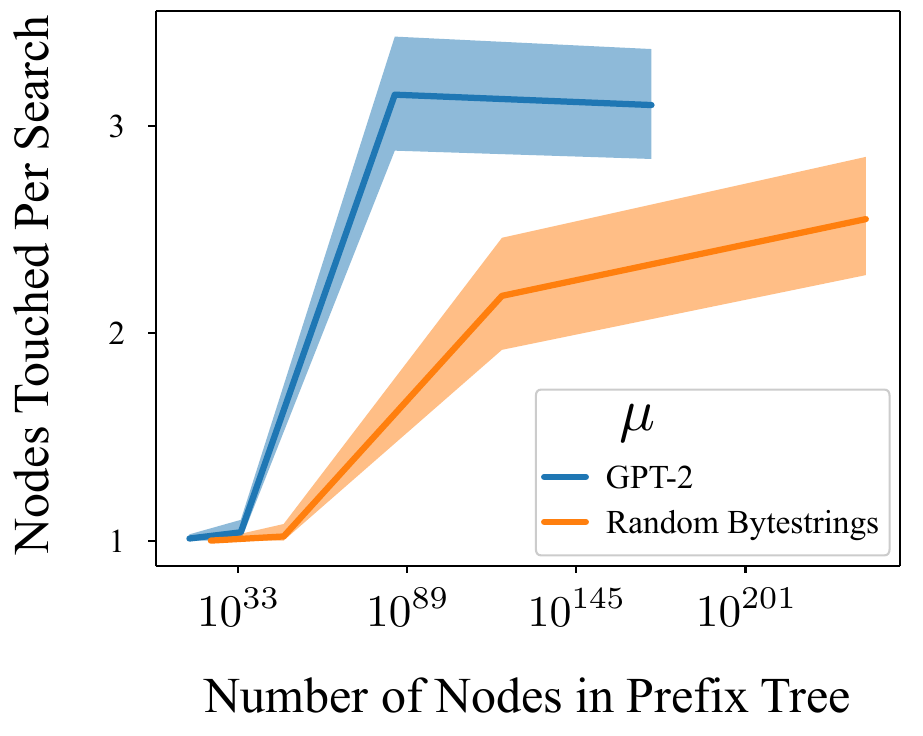}
    \caption{Results for number of nodes touched with 95\% bootstrap confidence intervals drawn from 100 samples.}
    \label{fig:search}
\end{figure}

\subsection{Markov Coding Games} \label{app:mcg}

\citet{pmlr-v162-sokota22a} specify Markov coding games as the following setting:
\begin{quote}
An MCG is a tuple $\langle (\mathcal{S}, \mathcal{A}, \mathcal{T}, \mathcal{R}), \mathcal{M}, \mu, \zeta \rangle$, where $(\mathcal{S}, \mathcal{A}, \mathcal{T}, \mathcal{R})$ is a Markov decision process, $\mathcal{M}$ is a set of messages, $\mu$ is a distribution over $\mathcal{M}$ (i.e., the prior over messages), and $\zeta$ is a non-negative real number we call the message priority.
\textbf{An MCG proceeds in the following steps:}
\begin{enumerate}[leftmargin=*, nosep]
    \item First, a message $M \sim \mu$ is sampled from the prior over messages and revealed to the sender.
    \item Second, the sender uses a message conditional policy $\pi_{\mid M}$, which takes states $s \in \mathcal{S}$ and messages $m \in \mathcal{M}$ as input and outputs distributions over MDP actions $\Delta(\mathcal{A})$, to generate a trajectory $Z \sim (\mathcal{T}, \pi_{\mid M})$ from the MDP.
    \item Third, the sender's terminal MDP trajectory $Z$ is given to the receiver as an observation.
    \item Fourth, the receiver uses a terminal MDP trajectory conditional policy $\pi_{\mid Z}$, which takes terminal trajectories $z \in \mathcal{Z}$ as input and outputs distributions over messages $\Delta(\mathcal{M})$, to estimate the message $\hat{M} \sim \pi_{\mid Z}(Z)$.
\end{enumerate}
The objective of the agents is to maximize the expected weighted sum of the return and the accuracy of the receiver's estimate $\mathbb{E} \left[\mathcal{R}(Z) + \zeta \mathbb{I}[M = \hat{M}] \mid \pi_{\mid M}, \pi_{\mid Z} \right]$.
Optionally, in cases in which a reasonable distance function is available, we allow for the objective to be modified to minimizing the distance between the message and the guess $d(M, \hat{M})$, rather than maximizing the probability that the guess is correct.
\end{quote}

\subsection{MEME} \label{app:meme}

\citet{pmlr-v162-sokota22a} specify MEME as follows:

\begin{quote}
\textbf{Step One: Maximum Entropy Reinforcement Learning} In the first step, MEME uses MaxEnt RL to construct an MDP policy $\pi$.
This policy is an MDP policy, not an MCG policy, and therefore does not depend on the message.
Note that this policy depends on the choice of temperature $\alpha$ used for the MaxEnt RL algorithm.

\textbf{Step Two: Minimum Entropy Coupling} 
In the second step, at execution time, MEME constructs a message-conditional policy online using MECs.
Say that, up to time $t$, the sender is in state $s^t$, history $h^t$ and has played according to the state and message conditional policy $\pi_{\mid M}^{:t}$ so far.
Let 
\[b^t = \mathcal{P}(M \mid h^t, \pi_{\mid M}^{:t})\] be the posterior over the message, conditioned on the history and the historical policy.
MEME performs a MEC between the posterior over the message $b^t$ and distribution over actions $\pi(s^t)$, as determined by the MDP policy.
Let $\nu = \text{MEC}(b^t, \pi(s^t))$ denote joint distribution over messages and actions resulting from the coupling.
Then MEME sets the sender to act according to the message conditional distribution 
\[\pi^t_{\mid M}(s^t, m) = \nu(A^t \mid M=m)\] 
of the coupling distribution $\nu = \text{MEC}(b^t, \pi(s^t))$.

Given the sender's MDP trajectory, MEME's receiver uses the sender's MDP policy and MEC procedure to reconstruct the sender's message conditional policy along the trajectory;
thereafter, the receiver computes the posterior and guesses the maximum a posteriori (MAP) message.
\end{quote}

\subsection{Information-Theoretic Steganography} \label{app:its}

\citet{witt2023perfectly} summarize \citet{cachin_perfect}'s information-theoretic steganography setting as follows:
\begin{quote}
\textbf{Problem Setting} The objects involved in information-theoretic steganography can be divided into two classes: those which are externally specified and those which require algorithmic specification.
Each class contains three objects.
The externally specified objects include the distribution over plaintext messages $\mathcal{M}$, the distribution over covertext $\mathcal{C}$, and the random source generator.
\begin{itemize}[leftmargin=*]
    \item The distribution over plaintext messages $\mathcal{M}$ may be known by the adversary, but is not known by the sender or the receiver.
    However, the sender and receiver are aware of the domain $\mathbb{M}$ over which $\mathcal{M}$ ranges.
    The sampled plaintext message $M$ is explicitly known by the sender, but not to the receiver or the adversary.
    \item The covertext distribution $\mathcal{C}$ is assumed to be known by the sender, the receiver, and the adversary.
    \item The random source generator provides the sender with a mechanism to take random samples from distributions.
    This random source is known to the sender but not to the receiver or adversary.
\end{itemize}

The objects requiring algorithmic specification, which are collectively referred to as a stegosystem, are the key generator, the encoder, and the decoder.
\begin{itemize}[leftmargin=*]
    \item The key generator produces a private key $K$ in the form of a binary string.
    This private key is shared between the sender and receiver over a secure channel prior to the start of the stegoprocess and can be used to coordinate encryption and decryption.
    The key generation process may be known to the adversary, but the realization of the key $K$ is not.
    \item The encoder takes a private key $K$, a plaintext message $M$, and a source of randomness $R$ as input and produces a stegotext $S$ in the space of covertexts $\mathbb{C}$.
    \item The decoder takes a private key $K$ and a stegotext $S$ as input and returns an estimated plaintext message $\hat{M}$.
\end{itemize}
\end{quote}

They specify the following objectives and methodological outline for the setting:
\begin{quote}
\begin{definition} \label{def:perf}
\citep{cachin_perfect} Given covertext distribution $\mathcal{C}$ and plaintext message space $\mathbb{M}$, a stegosystem is $\epsilon$-secure against passive adversaries if the KL divergence between the distribution of covertext $\mathcal{C}$ and the distribution of stegotext $\mathcal{S}$ less than $\epsilon$; i.e., $\text{KL}(\mathcal{C}, \mathcal{S}) < \epsilon$.
It is perfectly secure if the KL divergence is zero; i.e., $\text{KL}(\mathcal{C}, \mathcal{S}) = 0$.
\end{definition}

In other words, a steganographic system is perfectly secure if the distribution of stegotext $\mathcal{S}$ communicated by the sender is exactly the same as the distribution of covertext $\mathcal{C}$.

In addition to security, it is desirable for stegosystems to transmit information efficiently.
Mutual information between messages and stegotexts is one way to quantify efficiency.
\begin{definition} \label{def:mi}
The mutual information $\mathcal{I}(M; S) = \mathcal{H}(M) - \mathcal{H}(M \mid S)$ between the message $M$ and stegotext $S$ is the expected amount of uncertainty in the message $M$ that is removed by conditioning on the stegotext $S$.
\end{definition}

\textbf{Methodological Outline} A common class of stegosystems uses two-step encoding and two-step decoding processes, as described below:
\begin{enumerate}[leftmargin=*]
    \item \label{step:encrypt} The sender uses the private key $K$ to injectively map the plaintext message $M$ into ciphertext $\mathbb{X} = \{0, 1\}^{\ell}$ in such a way that the induced distribution over ciphertext $\mathcal{X}$ is uniformly random, regardless of the distribution of $\mathcal{M}$.\footnote{For example, if $K$ is drawn from a uniform random distribution, $\text{bin}(M)$ denotes a deterministic binarization of $M$, and $\text{XOR}$ represents the component-wise exclusive-or function, then $X=\text{XOR}(\text{bin}(M), K)$ is guaranteed to be distributed uniformly randomly, regardless of the distribution of messages~$\mathcal{M}$. \label{fn:enc}}
    \item \label{step:encode} The sender uses a (potentially stochastic) mapping $f \colon \mathbb{X} \rightsquigarrow \mathbb{C}$ to transform the ciphertext $X$ into stegotext $S$ (which exists in the space of covertexts $\mathbb{C}$).
    \item \label{step:estimate} The receiver estimates the ciphertext $\hat{X}$ from the stegotext $S$.
    \item \label{step:decrypt} The receiver inverts the estimated ciphertext $\hat{X}$ to a plaintext message $\hat{M}$ with private key~$K$.\footnote{For the example in footnote \ref{fn:enc}, the receiver can recover the binarized message $\text{bin}(M)$ using the mapping $X \mapsto \text{XOR}(X, K)$ and invert the binarization to recover the plaintext $M$.}
\end{enumerate}
\end{quote}
Given the definition below \citet{witt2023perfectly} show the following guarantees:
\begin{quote}
\begin{definition}
We say that an encoding procedure $f \colon \mathbb{X} \rightsquigarrow \mathbb{C}$ is induced by a coupling if there exists $\gamma \in \Gamma(\mathcal{X}, \mathcal{C})$ such that for all $x \in \mathbb{X}, c \in \mathbb{C}, \mathcal{P}(f(x){=}c) = \gamma(C{=}c \mid X{=}x)$. 
\end{definition}
\begin{theorem} \label{prop:coupling}
A steganographic encoding procedure is perfectly secure if and only if it is induced by a coupling.
\end{theorem}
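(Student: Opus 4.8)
The plan is to prove both directions of the biconditional directly from the definition of a coupling and the definition of ``induced by a coupling.'' Recall that a stegosystem's encoding procedure $f \colon \mathbb{X} \rightsquigarrow \mathbb{C}$ produces stegotext $S$ whose distribution $\mathcal{S}$ is obtained by pushing the ciphertext distribution $\mathcal{X}$ through $f$, i.e., $\mathcal{P}(S{=}c) = \sum_{x \in \mathbb{X}} \mathcal{X}(x)\, \mathcal{P}(f(x){=}c)$. By Definition~\ref{def:perf}, perfect security means $\mathrm{KL}(\mathcal{C}, \mathcal{S}) = 0$, which (since KL divergence is zero if and only if the two distributions coincide) is equivalent to $\mathcal{S} = \mathcal{C}$. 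So the entire theorem reduces to showing: $\mathcal{S} = \mathcal{C}$ if and only if $f$ is induced by some $\gamma \in \Gamma(\mathcal{X}, \mathcal{C})$.

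For the ``if'' direction, suppose $f$ is induced by a coupling $\gamma \in \Gamma(\mathcal{X}, \mathcal{C})$, so that $\mathcal{P}(f(x){=}c) = \gamma(C{=}c \mid X{=}x)$ for all $x, c$. Then for each $c \in \mathbb{C}$,
\begin{align*}
\mathcal{P}(S{=}c) &= \sum_{x \in \mathbb{X}} \mathcal{X}(x)\, \mathcal{P}(f(x){=}c) = \sum_{x \in \mathbb{X}} \mathcal{X}(x)\, \gamma(C{=}c \mid X{=}x)\\
&= \sum_{x \in \mathbb{X}} \gamma(X{=}x, C{=}c) = \mathcal{C}(c),
\end{align*}
where the last equality is exactly the marginalization constraint~\eqref{cond:coupling1} that $\gamma$ satisfies as a coupling of $\mathcal{X}$ and $\mathcal{C}$. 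Hence $\mathcal{S} = \mathcal{C}$ and the procedure is perfectly secure.

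For the ``only if'' direction, suppose the procedure is perfectly secure, so $\mathcal{S} = \mathcal{C}$. Define $\gamma(X{=}x, C{=}c) := \mathcal{X}(x)\, \mathcal{P}(f(x){=}c)$. This is a nonnegative function summing to one, hence a joint distribution; it marginalizes to $\mathcal{X}$ over $c$ (since $\sum_c \mathcal{P}(f(x){=}c) = 1$), and it marginalizes over $x$ to $\mathcal{P}(S{=}c) = \mathcal{C}(c)$ by the perfect security assumption. Thus $\gamma \in \Gamma(\mathcal{X}, \mathcal{C})$, and by construction $\gamma(C{=}c \mid X{=}x) = \mathcal{P}(f(x){=}c)$ wherever $\mathcal{X}(x) > 0$; on zero-probability ciphertexts the conditional can be defined arbitrarily to match $f$ (or such $x$ can be ignored since the claim need only constrain the induced distributions). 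Therefore $f$ is induced by the coupling $\gamma$.

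The argument is essentially a bookkeeping exercise, so there is no serious obstacle; the only point requiring a little care is the ``only if'' direction, namely checking that the candidate $\gamma$ genuinely satisfies both marginalization constraints~\eqref{cond:coupling1} and~\eqref{cond:coupling2} — the first uses that $f(x)$ is a probability distribution over $\mathbb{C}$ for each $x$, and the second is precisely where the perfect-security hypothesis $\mathcal{S} = \mathcal{C}$ enters. I would also state up front the reduction of $\mathrm{KL} = 0$ to distributional equality so that the rest of the proof can work purely with $\mathcal{S} = \mathcal{C}$.
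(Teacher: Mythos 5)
Your proof is correct: the reduction of perfect security to $\mathcal{S}=\mathcal{C}$, the marginalization argument for the ``if'' direction, and the explicit construction $\gamma(x,c)=\mathcal{X}(x)\,\mathcal{P}(f(x){=}c)$ for the ``only if'' direction (with the zero-probability caveat handled, and in fact moot here since the ciphertext distribution is uniform) is exactly the standard argument. Note that this paper does not prove the statement itself---it only quotes the theorem from \citet{witt2023perfectly}---and your argument matches the approach taken in that original work, so there is nothing further to flag.
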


\begin{theorem} \label{prop:mec}
Among perfectly secure encoding procedures, a procedure $f \colon \mathbb{X} \rightsquigarrow \mathbb{C}$ maximizes the mutual information $\mathcal{I}(M; S)$ if and only if $f$ is induced by a minimum entropy coupling.
\end{theorem}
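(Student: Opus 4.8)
The plan is to reduce the maximization over perfectly secure encoding procedures to a maximization over couplings, and then to observe that, on that restricted domain, the mutual-information objective is an affine, strictly decreasing function of the joint entropy of the coupling.

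First, I would invoke \Cref{prop:coupling}, which characterizes the perfectly secure procedures $f \colon \mathbb{X} \rightsquigarrow \mathbb{C}$ as exactly those induced by a coupling: each such $f$ corresponds to a $\gamma \in \Gamma(\mathcal{X}, \mathcal{C})$ via $\mathcal{P}(f(x){=}c) = \gamma(c \mid x)$, so that the joint law of the (ciphertext, stegotext) pair $(X, S)$ under $f$ equals $\gamma$; conversely every such $\gamma$ arises this way, so the correspondence is a bijection. Here $\mathcal{X}$ is the ciphertext distribution, which the encryption step forces to be uniform on $\mathbb{X}$ regardless of the message distribution $\mathcal{M}$, and $\mathcal{C}$ is the externally specified covertext distribution. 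Thus it suffices to show that, over $\gamma \in \Gamma(\mathcal{X}, \mathcal{C})$, the induced mutual information is maximal exactly when $\mathcal{H}(\gamma)$ is minimal.

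Second, I would expand the objective into entropies. For $(X, S) \sim \gamma$,
\[ \mathcal{I}(X; S) = \mathcal{H}(X) + \mathcal{H}(S) - \mathcal{H}(X, S) = \mathcal{H}(\mathcal{X}) + \mathcal{H}(\mathcal{C}) - \mathcal{H}(\gamma), \]
using $S \sim \mathcal{C}$ by perfect security. The first two terms depend only on the externally fixed ciphertext and covertext marginals, hence are constant across the choice of $f$, so $\mathcal{I}(X; S)$ is a strictly decreasing affine function of $\mathcal{H}(\gamma)$. This yields both implications simultaneously: $f$ attains the maximum iff its coupling attains $\min_{\gamma' \in \Gamma(\mathcal{X}, \mathcal{C})} \mathcal{H}(\gamma')$, i.e.\ is a minimum-entropy coupling, with the strictness of the monotonicity supplying the ``only if'' direction. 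To phrase this in terms of $\mathcal{I}(M; S)$ as stated, I would use that the encryption map is injective and is inverted by the receiver with the shared key: conditioned on the key, message and ciphertext are in bijection, so the stegotext is exactly as informative about the message as about the ciphertext, and relabeling a coordinate of $\gamma$ by a bijection leaves $\mathcal{H}(\gamma)$ unchanged; hence the same constant-minus-$\mathcal{H}(\gamma)$ decomposition applies. (In the keyless linguistic variant one simply has $M = X$.)

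The main obstacle is this last translation: pinning down precisely which mutual-information quantity is meant once a private key is present, and verifying that the two marginal entropies entering the decomposition are genuinely invariant to $f$---the ciphertext entropy because the one-time-pad-style encryption yields a uniform ciphertext for every message distribution, and the stegotext entropy because perfect security forces the stegotext law to equal $\mathcal{C}$. Once those are established, what remains is the one-line monotonicity argument above, so I expect the bulk of the work to be in the setup rather than the analysis.
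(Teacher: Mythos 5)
This statement is quoted from \citet{witt2023perfectly}, and the present paper offers no proof of it, but your argument coincides with the proof in that source: restrict attention to couplings via the perfect-security characterization, expand $\mathcal{I} = \mathcal{H}(\mathcal{X}) + \mathcal{H}(\mathcal{C}) - \mathcal{H}(\gamma)$ with both marginal entropies fixed (the ciphertext is uniform by the encryption step, the stegotext law equals $\mathcal{C}$ by perfect security), and conclude that maximizing mutual information is exactly minimizing $\mathcal{H}(\gamma)$. The residual issue you flag---identifying $\mathcal{I}(M;S)$ with the ciphertext--stegotext mutual information through the key-indexed bijection---is treated at the same level of formality in the cited work as in your write-up, so your proposal is correct and takes essentially the same route.
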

\end{quote}

\end{document}